\newtheorem{theorem}{Theorem}[section]
\newtheorem{lemma}[theorem]{Lemma}
\crefname{assumption}{Assumption}{Assumptions}
\newtheorem{observation}[theorem]{Observation}
\theoremstyle{definition}
\newtheorem{definition}[theorem]{Definition}
\newcommand{\cH}{\mathcal{H}}
\newcommand{\cX}{\mathcal{X}}
\newcommand{\cM}{\mathcal{M}}
\newcommand{\cA}{\mathcal{A}}
\newcommand{\cI}{\mathcal{I}}
\newcommand{\cP}{\mathcal{P}}
\newcommand{\cPc}{\cP^{\mathrm{conn}}}
\newcommand{\eps}{\varepsilon}
\newcommand{\N}{\mathbb{N}}
\newcommand{\R}{\mathbb{R}}
\newcommand{\bu}{{\bm u}}
\newcommand{\tbu}{\tilde{\bu}}
\newcommand{\tu}{\tilde{u}}
\renewcommand{\phi}{\varphi}
\DeclareMathOperator{\Ber}{Ber}
\DeclareMathOperator{\range}{range}
\DeclareMathOperator{\rank}{rank}
\DeclareMathOperator{\scr}{scr}
\DeclareMathOperator{\topk}{top-k}
\DeclareMathOperator{\topv}{top-}
\definecolor{Gred}{RGB}{219, 50, 54}
\definecolor{Ggreen}{RGB}{60, 186, 84}
\definecolor{Gblue}{RGB}{72, 133, 237}
\definecolor{Gyellow}{RGB}{247, 178, 16}
\definecolor{ToCgreen}{RGB}{0, 128, 0}
\definecolor{myGold}{RGB}{231,141,20}
\definecolor{myBlue}{rgb}{0.19,0.41,.65}
\definecolor{myPurple}{RGB}{175,0,124}
\title{Differentially Private Fair Division}
\author{
    Pasin Manurangsi,\textsuperscript{\rm 1}
    Warut Suksompong\textsuperscript{\rm 2}
}
\begin{document}

\maketitle

\begin{abstract}
Fairness and privacy are two important concerns in social decision-making processes such as resource allocation. 
We study privacy in the fair allocation of indivisible resources using the well-established framework of differential privacy.
We present algorithms for approximate envy-freeness and proportionality when two instances are considered to be adjacent if they differ only on the utility of a single agent for a single item.
On the other hand, we provide strong negative results for both fairness criteria when the adjacency notion allows the entire utility function of a single agent to change.
\end{abstract}

\section{Introduction}

Fairness is a principal concern in numerous social decision-making processes, not least when it comes to allocating scarce resources among interested parties.
Whether we divide equipment between healthcare personnel, assign facility time slots to potential users, or distribute office space among working groups in an organization, it is desirable that all parties involved feel fairly treated.
While fair division has been studied in economics for several decades \citep{BramsTa96,Moulin03,Moulin19}, the subject has received substantial interest from computer scientists in recent years, much of which has concentrated on the fair allocation of indivisible resources \citep{BouveretChMa16,Markakis17,Walsh20,Suksompong21,AmanatidisBiFi22,AzizLiMo22}.

The fair division literature typically focuses on satisfying concrete fairness criteria.
Two of the most prominent criteria are \emph{envy-freeness} and \emph{proportionality}.
In an envy-free allocation, no agent prefers to have another agent's bundle instead of her own.
In a proportional allocation, every agent receives value at least $1/n$ of her value for the entire resource, where $n$ denotes the number of agents.
As neither of these criteria is always satisfiable,\footnote{For example, imagine two siblings fighting for one toy.} researchers have proposed the relaxations \emph{envy-freeness up to $c$ items (EF$c$)} and \emph{proportionality up to $c$ items (PROP$c$)}; here, $c$ is a non-negative integer parameter.
Under additive utilities, EF$c$ implies PROP$c$ for every $c$,\footnote{See the proof of Proposition~2.2(b) in our prior work \citep{ManurangsiSu21}.} and an EF1 allocation (which must also be PROP1) is guaranteed to exist \citep{LiptonMaMo04}.

In addition to fairness, another consideration that has become increasingly important nowadays---as large amounts of data are constantly collected, processed, and analyzed---is privacy.
Indeed, an agent participating in a resource allocation procedure 
may not want other participants to know her preferences if she considers these as sensitive information, for example, if these preferences correspond to the times when she is available to use a facility, or if they represent her valuations for potential team members when distributing employees in an organization.
Consequently, a desirable procedure should ensure that an individual participant's preferences cannot be inferred based on the output of the procedure.
Achieving privacy alone is trivial, as the procedure can simply ignore the agents' preferences and always output a fixed allocation that it announces publicly in advance.
However, it is clear that such a procedure can be highly unfair for certain preferences of the agents.
Despite its significance, the issue of privacy has been largely unaddressed in the fair division literature as far as we are aware.\footnote{\citet{SunYa09} studied a setting in which agents have ``reservation values'' over items, and considered a mechanism to be private if it does not require agents to reveal these values.}

In this paper, we investigate the fundamental question of whether fairness and privacy can be attained simultaneously in the allocation of indivisible resources.
We use the well-established framework of \emph{differential privacy (DP)}, which has been widely adopted not only in academia but also in industry~\citep{ErlingssonPiKo14,Shankland14,Greenberg16,Apple17,DingKuYe17} as well as government sectors~\citep{Abowd18}.
Intuitively, the output distribution of a (randomized) DP\footnote{We use the abbreviation DP for both ``differential privacy'' and ``differentially private''.} algorithm should not change by much when a single ``entry'' of the input is modified. 
DP provides a privacy protection for individual entries by ensuring that an adversary with access to the output can only gain limited information about each individual entry. 
At the same time, DP algorithms often still provide useful outputs based on the aggregated information.
We outline the tools and concepts from DP used in this work in \Cref{sec:differential-privacy}.\footnote{For in-depth treatments of the subject, we refer to the surveys by \citet{Dwork08} and \citet{DworkRo14}.}

\renewcommand{\arraystretch}{1.4}
\begin{table*}[!ht]
\centering
\begin{tabular}{|c|c|c|c|}
\cline{3-4}
\multicolumn{2}{c|}{} & \textbf{Agent-Level DP} & \textbf{(Agent $\times$ Item)-Level DP} \\
\hline
\multirow{2}*{EF} & Upper & $O(m/n)$ (Trivial) & $O(n \log m)$ (\Cref{thm:alg-ef-em}) \\
\cline{2-4}
& Lower & $\Omega\left(\sqrt{\frac{m}{n} \log n}\right)$ (\Cref{thm:agent-level-ef-lb}) & $\Omega(\log m)$ (\Cref{thm:lb-agent-item-dp-connected}) \\
\hline
\multirow{2}*{PROP} & Upper & $O(m/n)$ (Trivial) & $O(\log m)$ (\Cref{thm:prop-moving-knife}) \\
\cline{2-4}
& Lower & $\Omega\left(\sqrt{\frac{m}{n}}\right)$ (\Cref{thm:agent-level-prop-lb}) & $\Omega((\log m)/n)$ (\Cref{thm:lb-agent-item-dp-connected-prop}) \\
\hline
\end{tabular}
\caption{Overview of our results. We display the smallest $c$ such that there is an $\eps$-DP algorithm that outputs an EF$c$ or PROP$c$ allocation with probability at least $1 - \beta$. 
For simplicity of the bounds, we assume that $m > n^2$, $\eps$ is a small constant, and $\beta$ is sufficiently small (depending only on $\eps, n$ and not on $m$). 
All upper bounds hold even for connected allocations. Lower bounds for (agent $\times$ item)-level DP hold only for connected allocations, but those for agent-level DP hold for arbitrary allocations.
The bounds $O(m/n)$ follow trivially from outputting a fixed allocation in which each agent receives $O(m/n)$ items.
} 
\label{table:summary}
\end{table*}

As alluded to above, DP is defined with respect to what one considers to be an entry of the input, or equivalently, in terms of adjacent inputs.
We consider two notions of adjacency between fair division instances.
For \emph{agent-level adjacency}, two instances are considered to be adjacent if they differ on the utility function of at most one agent.
For \emph{(agent $\times$ item)-level adjacency}, two instances are adjacent if they differ at most on the utility of a single agent for a single item.
We work with the standard definition of \emph{$\eps$-differential privacy ($\eps$-DP)}: for a parameter $\eps\ge 0$, an algorithm is said to satisfy $\eps$-DP if the probability that it outputs a certain allocation for an input and the corresponding probability for an adjacent input differ by a factor of at most $e^\eps$. 
Note that, for the same $\eps$, \emph{agent-level DP} offers a stronger privacy protection for an entire utility function of an individual agent, whereas \emph{(agent $\times$ item)-level DP} only offers a protection for a utility of a single agent for a single item.
Our goal is to devise $\eps$-DP algorithms that output an EF$c$ or PROP$c$ allocation for a small value of $c$ with sufficiently high probability, or to prove that this task is impossible.
Denote by $n$ and $m$ the number of agents and items, respectively.

We begin in \Cref{sec:agent-level} by considering agent-level DP. 
For this demanding benchmark, we establish strong lower bounds with respect to both approximate envy-freeness and proportionality (\Cref{thm:agent-level-ef-lb,thm:agent-level-prop-lb}). 
In both cases, our lower bounds imply that, for fixed $n$ and $\eps$, no $\eps$-DP algorithm can output an EF$c$ or PROP$c$ allocation for $c = o(\sqrt{m})$ with some large constant probability.
Our results hold even when the agents have binary additive utilities, and indicate that agent-level DP is too stringent to permit algorithms with significant fairness guarantees. 

In \Cref{sec:agent-item-level}, we turn our attention to (agent $\times$ item)-level DP, and deliver encouraging news for this more relaxed notion. 
In contrast to the previous lower bounds, we present $\eps$-DP algorithms for EF$c$ and PROP$c$ where $c$ only grows logarithmically in $m$ for fixed $n$ and $\eps$ (\Cref{thm:alg-ef-em,thm:prop-moving-knife}). 
Our EF$c$ algorithm works for arbitrary monotone utility functions, whereas our PROP$c$ algorithm allows (not necessarily binary) additive utilities.
Moreover, our algorithms always output allocations that are \emph{connected}.\footnote{See \Cref{sec:prelim} for the definition. 
Connectivity can be desirable when there is a spatial or temporal order of the items, for instance, when allocating time slots to facility users or offices along a corridor to research groups \citep{BouveretCeEl17,Suksompong19,BeiIgLu22,BiloCaFl22}.} 
We complement these results by showing a tight lower bound of $\Omega(\log m)$ for connected allocations (\Cref{thm:lb-agent-item-dp-connected,thm:lb-agent-item-dp-connected-prop}), even with binary additive utilities. 

A summary of our results can be found in \Cref{table:summary}.

\section{Preliminaries}
\label{sec:prelim}

\subsection{Fair Division}
\label{sec:fair-division}

In fair division of indivisible items, there is a set $N = [n]$ of agents and a set $M = [m]$ of items, where $[k]$ denotes the set $\{1,2,\dots,k\}$ for each positive integer $k$.
The utility function of agent~$i$ is given by $u_i: 2^M \to \R_{\geq 0}$. 
Throughout this work, we assume that utility functions are \emph{monotone}, that is, $u_i(S) \le u_i(T)$ for any $S\subseteq T\subseteq M$. 
For a single item $j\in M$, we write $u_i(j)$ instead of $u_i(\{j\})$.
We seek to output an \emph{allocation} $A = (A_1, \dots, A_n)$, which is an ordered partition of $M$ into $n$ bundles. 

We consider two important fairness notions. 
Let $c$ be a non-negative integer.
\begin{itemize}
\item An allocation $A$ is said to be \emph{envy-free up to $c$ items (EF$c$)} if, for any $i, i' \in N$, there exists $S\subseteq {A_{i'}}$ with $|S| \le c$ such that $u_i(A_i) \geq u_i(A_{i'} \setminus S)$.
\item An allocation $A$ is said to be \emph{proportional up to $c$ items (PROP$c$)} if, for any $i \in N$, there exists $S\subseteq M\setminus A_i$ with $|S| \le c$ such that $u_i(A_i) \geq u_{i}(M) / n - u_i(S)$.
\end{itemize}

We say that an allocation~$A$ is \emph{connected} if each $A_i$ corresponds to an interval, i.e., $A_i = \{\ell, \ell + 1, \dots, r\}$ for some $\ell, r \in M$; it is possible that $A_i$ is a singleton or empty. 
Let $\cPc(m, n)$ denote the set of all connected allocations.
We will use the following result on the existence of connected EF2 allocations.\footnote{Recently, \citet{Igarashi23} improved this guarantee to EF1.}
\begin{theorem}[\cite{BiloCaFl22}] \label{thm:connected-ef2}
For any $m, n \in \N$ and any monotone utility functions, there exists a connected EF2 allocation.
\end{theorem}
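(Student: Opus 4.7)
The plan is to reduce the statement to a connected envy-free cake-cutting theorem and then discretize. I would identify the cake with the interval $[0, m]$, associating item $j \in M$ with the unit subinterval $[j-1, j]$, and for each agent $i$ build a continuous monotone ``hungry'' valuation $v_i$ on closed subintervals of $[0, m]$ that extends $u_i$ in the sense that $v_i([a, b]) = u_i(\{a+1, \dots, b\})$ whenever $a < b$ are integers. For additive utilities the standard prorating of partial items by overlap length works; for general monotone utilities a Lov\'asz-style interpolation across the two boundary items keeps $v_i$ continuous and monotone with respect to interval inclusion.

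I would then apply Stromquist's connected envy-free cake-cutting theorem---in Su's topological form, which handles arbitrary continuous hungry monotone preferences---to the $v_i$'s, obtaining fractional cuts $0 = c_0 \le c_1 \le \dots \le c_n = m$ and cake bundles $C_i := [c_{i-1}, c_i]$ satisfying $v_i(C_i) \ge v_i(C_{i'})$ for all $i, i' \in N$.

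Next, I would round each $c_i$ to an adjacent integer $\tilde c_i$, preserving the ordering $0 = \tilde c_0 \le \tilde c_1 \le \dots \le \tilde c_n = m$, and set $\tilde A_i := \{\tilde c_{i-1} + 1, \dots, \tilde c_i\}$. This yields a connected integer allocation. For EF2, I would fix $i, i' \in N$: because every cut is displaced by less than one, the bundle $\tilde A_{i'}$ differs from $\{j : [j-1, j] \subseteq C_{i'}\}$ by at most one item at each of its two ends. Taking $S \subseteq \tilde A_{i'}$ to be these boundary items (so $|S| \le 2$) gives $\tilde A_{i'} \setminus S \subseteq \{j : [j-1, j] \subseteq C_{i'}\}$, so that combining monotonicity of $u_i$, the cake envy-freeness $v_i(C_i) \ge v_i(C_{i'})$, and the fact that $v_i$ extends $u_i$ should yield $u_i(\tilde A_i) \ge u_i(\tilde A_{i'} \setminus S)$.

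The main obstacle is making this last chain of inequalities close cleanly for general monotone utilities. The continuous value $v_i(C_i)$ need not equal $u_i(\tilde A_i)$---it lies between $u_i$ applied to the ``interior'' and to the ``closure'' item sets of $C_i$---so one must calibrate the rounding direction (for instance, always rounding down) together with the choice of $S$ so that $u_i(\tilde A_i)$ dominates $v_i(C_i)$ while $u_i(\tilde A_{i'} \setminus S)$ is dominated by $v_i(C_{i'})$. For additive utilities this calibration is immediate from additivity; for general monotone utilities the continuous extensions $v_i$ must be designed with care, or, alternatively, one can bypass the cake detour and run a direct Sperner-type argument on the discrete simplex of integer cut configurations, which is the route taken in the original Bil\`o--Caragiannis--Flammini proof.
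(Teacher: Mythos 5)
This statement is not proved in the paper at all: it is imported verbatim from \citet{BiloCaFl22}, whose argument is a direct discrete one (a Sperner-type labelling of the simplex of integer cut vectors), precisely the route you name in your last sentence but do not carry out. So the only thing to assess is whether your cake-cutting reduction closes, and it does not. The decisive gap is the one you half-acknowledge: after rounding, the value that agent $i$ loses relative to $v_i(C_i)$ comes from the (at most two) boundary items of her \emph{own} piece $C_i$, and those items end up in the bundles of $i$'s neighbours in the discrete allocation. EF2 only permits removing items from the \emph{envied} bundle $A_{i'}$, so for a non-adjacent $i'$ the chain yields at best $u_i(\tilde A_i \cup B) \ge u_i(\tilde A_{i'} \setminus S)$ with $|B|,|S|\le 2$, which is a different (and weaker) guarantee than EF2. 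No choice of rounding direction fixes this simultaneously for all agents, since the bundles partition $M$; this is why the discretization route, even for additive utilities, is known to give only additive-error approximations of the form $u_i(A_i)\ge u_i(A_{i'})-2u_{\max}$ rather than EF2 proper.

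There is a second, smaller issue for the full generality claimed by the theorem (``any monotone utility functions''): Stromquist's and Su's theorems require the hungry condition (no agent ever prefers an empty piece), which monotone utilities need not induce --- consider an all-zero $u_i$, or an agent who values a single item. This can in principle be patched by perturbing $v_i$ with $\delta\cdot\mathrm{length}$ and passing to the limit using closedness of the preference sets, but your sketch does not do this, and the Lov\'asz-style interval extension also needs to be checked for the continuity-in-the-cuts hypothesis of the topological theorem. In short: the approach is a reasonable heuristic for ``approximately envy-free connected allocations exist,'' but it does not deliver EF2, and the correct proof is the discrete Sperner argument of Bil\`o et al.\ that you mention only in passing.
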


A utility function $u_i$ is \emph{additive} if $u_i(S) = \sum_{j \in S} u_i(j)$ for all $S \subseteq M$.
Furthermore, an additive utility function is said to be \emph{binary} if $u_i(j) \in \{0, 1\}$ for all $j \in M$.

\subsection{Differential Privacy}
\label{sec:differential-privacy}

Let us start by recalling the general definition of DP.
Denote by $\cX$ the set of all possible inputs to the algorithm.

\begin{definition}[Differential Privacy~\cite{DworkMcNi06}] \label{def:dp}
Let $\eps \geq 0$ be a non-negative real number.
A randomized algorithm $\cM$ is said to be \emph{$\eps$-differentially private ($\eps$-DP)} if, for every pair of adjacent inputs $X, X'$, it holds that
\begin{align*}
\Pr[\cM(X) = o] \leq e^\eps \cdot \Pr[\cM(X') = o]
\end{align*}
for all $o \in \range(\cM)$.
\end{definition}

An input in our fair division context consists of the agents' utility functions.
Different notions of adjacency lead to different levels of privacy protection. 
We consider two natural notions: agent-level DP and (agent $\times$ item)-level DP.

\begin{definition}[Agent-Level DP]
Two inputs $(u_i)_{i \in N}$ and $(u'_i)_{i \in N}$ are said to be \emph{agent-level adjacent} if they coincide on all but a single agent, i.e., there exists $i^* \in N$ such that $u_i = u'_i$ for all $i \in N \setminus \{i^*\}$.

An algorithm that is $\eps$-DP against this adjacency notion is said to be \emph{agent-level $\eps$-DP}.
\end{definition}

\begin{definition}[(Agent $\times$ Item)-Level DP]
Two inputs $(u_i)_{i \in N}$ and $(u'_i)_{i \in N}$ are said to be \emph{(agent $\times$ item)-level adjacent} if they coincide on all but the utility of a single agent for a single item, i.e., there exist $i^* \in N, j^* \in M$ such that
\begin{itemize}
\item $u_i = u'_i$ for all $i \in N \setminus \{i^*\}$, and
\item $u_{i^*}(S) = u'_{i^*}(S)$ for all $S \subseteq M \setminus \{j^*\}$.
\end{itemize}

An algorithm that is $\eps$-DP against this adjacency notion is said to be \emph{(agent $\times$ item)-level $\eps$-DP}.
\end{definition}

It is clear that agent-level DP is a more demanding notion than (agent $\times$ item)-level DP.
Specifically, the former provides a stronger privacy protection than the latter, and designing an algorithm for the former is more difficult. 
Indeed, we will prove strong lower bounds for agent-level DP and present algorithms for (agent $\times$ item)-level DP.

Next, we outline several tools from the DP literature that will be useful for our proofs.

\paragraph{Basic Composition.}

The first tool that we will use is the composition of DP: the result of running multiple DP algorithms remains DP, but with a worse privacy parameter. 

\begin{theorem}[Basic Composition of DP, e.g.,~\cite{DworkRo14}] \label{thm:basic-comp}
An algorithm that is a result of running two algorithms (possibly in an adaptive manner) that are $\eps_1$-DP and $\eps_2$-DP, respectively, is $(\eps_1 + \eps_2)$-DP.
\end{theorem}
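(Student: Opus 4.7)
The plan is to prove the composition bound by a direct product-of-probabilities argument, which is the standard route. Let me set up the notation carefully. Write the composed algorithm as $\cM(X) = (\cM_1(X), \cM_2(X, \cM_1(X)))$, where $\cM_1$ is $\eps_1$-DP and, for every fixed first argument $o_1$, the map $X \mapsto \cM_2(X, o_1)$ is $\eps_2$-DP (this is what adaptive $\eps_2$-DP means). The two algorithms draw their internal randomness independently.

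First I would fix any two adjacent inputs $X, X'$ and any element $(o_1, o_2) \in \range(\cM)$, and decompose
\begin{equation*}
\Pr[\cM(X) = (o_1, o_2)] = \Pr[\cM_1(X) = o_1] \cdot \Pr[\cM_2(X, o_1) = o_2],
\end{equation*}
using the independence of the internal randomness of the two algorithms. The analogous identity holds for $X'$. Then I would apply the $\eps_1$-DP guarantee of $\cM_1$ to the first factor, yielding $\Pr[\cM_1(X) = o_1] \le e^{\eps_1} \Pr[\cM_1(X') = o_1]$, and the $\eps_2$-DP guarantee of $\cM_2$ (with $o_1$ held fixed) to the second factor, yielding $\Pr[\cM_2(X, o_1) = o_2] \le e^{\eps_2} \Pr[\cM_2(X', o_1) = o_2]$. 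Multiplying these two inequalities and re-combining via the analogous product identity for $X'$ gives $\Pr[\cM(X) = (o_1, o_2)] \le e^{\eps_1 + \eps_2} \Pr[\cM(X') = (o_1, o_2)]$, which is the desired $(\eps_1 + \eps_2)$-DP conclusion.

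The only subtle point (and the main thing to get right, rather than a real obstacle) is formalizing the adaptive case. The assumption that $\cM_2$ is ``$\eps_2$-DP'' has to be interpreted as a uniform statement over every possible first argument $o_1$ that $\cM_1$ could output, so that the $\eps_2$-DP bound applies even when $o_1$ is itself a random function of $X$. Once one fixes $o_1$ and conditions on it, the second-factor bound holds. Everything else is mechanical: the range of $\cM$ is discrete in our fair-division setting (outputs are allocations), so pointwise bounds on $\Pr[\cM(X)=o]$ suffice; no measurability issues arise. For outputs $o$ that are tuples in a more general range one would replace the pointwise equality with membership in a measurable set, but the inequality chain is identical.
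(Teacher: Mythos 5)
The paper does not give its own proof of \Cref{thm:basic-comp}; it is imported as a known result and attributed to \citet{DworkRo14}, so there is no in-paper argument to compare against. That said, your proof is the standard product-decomposition argument and is correct: you factor $\Pr[\cM(X)=(o_1,o_2)]$ into $\Pr[\cM_1(X)=o_1]\cdot\Pr[\cM_2(X,o_1)=o_2]$ using independence of the two algorithms' coins (and the fact that conditioning on the \emph{value} $\cM_1(X)=o_1$ replaces the random first argument of $\cM_2$ by the fixed value $o_1$), bound each factor by its respective DP guarantee, and recombine. You also correctly identify the one subtlety in the adaptive case, namely that $\eps_2$-DP must hold uniformly over every possible first-stage output $o_1$, which is indeed the right reading. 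This matches the textbook proof the paper is implicitly invoking.
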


A special case of basic composition (\Cref{thm:basic-comp}) that is often used is the case $\eps_2 = 0$, which is referred to as \emph{post-processing} of DP.

\begin{observation}[Post-processing of DP] \label{obs:post-processing}
An algorithm that runs an $\eps$-DP subroutine and then returns a function of the output of this subroutine is also $\eps$-DP.
\end{observation}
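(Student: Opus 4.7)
The plan is to derive this observation directly from the Basic Composition theorem (\Cref{thm:basic-comp}) by viewing the post-processing function as a trivially $0$-DP algorithm. Concretely, let $\cM$ denote the $\eps$-DP subroutine and let $f$ denote the (possibly randomized) function applied to its output. I would consider the algorithm $f$ itself as a mechanism on $\cX$ that, on any input $X \in \cX$, first ignores $X$ and instead reads the output $o$ of $\cM(X)$, then returns a sample from $f(o)$. Since the distribution of $f(o)$ depends only on $o$, not on $X$, this auxiliary mechanism is $0$-DP against any adjacency notion. Applying \Cref{thm:basic-comp} with $\eps_1 = \eps$ and $\eps_2 = 0$ to the adaptive composition of $\cM$ followed by $f$ immediately yields an $(\eps + 0)$-DP $= \eps$-DP algorithm, which is exactly the composition $f \circ \cM$ described in the observation.

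As a direct sanity check that avoids invoking composition machinery, I would also note the one-line calculation: for any pair of adjacent inputs $X, X'$ and any $o' \in \range(f \circ \cM)$, marginalize over the intermediate output $o$ of $\cM$ to get
\begin{align*}
\Pr[f(\cM(X)) = o'] &= \sum_{o} \Pr[\cM(X) = o] \cdot \Pr[f(o) = o'] \\
&\leq \sum_{o} e^\eps \cdot \Pr[\cM(X') = o] \cdot \Pr[f(o) = o'] \\
&= e^\eps \cdot \Pr[f(\cM(X')) = o'],
\end{align*}
where the inequality applies the $\eps$-DP guarantee of $\cM$ termwise. (A measure-theoretic version with an integral replaces the sum when $\range(\cM)$ is not discrete, but the argument is unchanged.)

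There is no real obstacle here; the only conceptual point worth stating explicitly is the assumption that the post-processing function $f$ accesses the original input $X$ only through the output of $\cM$, i.e., its random coins are independent of $X$. If $f$ were allowed to peek at $X$ directly, the observation would fail, so this independence is exactly the property that lets us treat $f$ as $0$-DP in the composition argument.
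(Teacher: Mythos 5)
Your proposal is correct and matches the paper's own justification: the paper presents this observation precisely as the special case of basic composition (\Cref{thm:basic-comp}) with $\eps_2 = 0$, which is your main argument. Your additional direct marginalization calculation is a valid standalone verification but is not needed beyond that.
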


\paragraph{Parallel Composition.}

While basic composition provides a simple way to account for the privacy budget $\eps$ when we run multiple algorithms, it can be improved in certain cases. One such case is when the algorithms are run on ``disjoint pieces'' of the input. In this case, we do not need to add up the $\eps$'s, a fact known as \emph{parallel composition} of DP~\cite{McSherry10}. Since the statement we use below is slightly different from that in McSherry's work, we provide a full proof for completeness in \Cref{app:proof-parallel-comp}.

\begin{theorem}[\cite{McSherry10}] \label{thm:parallel-comp}
Let $\sim$ be an adjacency notion, and let $\Gamma: \cX \to \cX^k$ be a function such that, if $X \sim X'$, then there exists $i^* \in [k]$ such that $\Gamma(X)_i = \Gamma(X')_i$ for all $i \in [k] \setminus \{i^*\}$, and $\Gamma(X)_{i^*} \sim \Gamma(X')_{i^*}$. 
If $\cM$ is an $\eps$-DP algorithm with respect to the adjacency notion $\sim$, then the algorithm $\cM'$ that outputs $(\cM(\Gamma(X)_1), \dots, \cM(\Gamma(X)_k))$ is also $\eps$-DP with respect to $\sim$.
\end{theorem}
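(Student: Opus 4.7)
The plan is to exploit the fact that $\cM'$ runs $\cM$ on the $k$ coordinates of $\Gamma(X)$ using independent random coins, so the joint output distribution factorizes as a product over coordinates. Fix any adjacent pair $X \sim X'$ and any output tuple $o = (o_1, \dots, o_k) \in \range(\cM)^k$. By independence,
\begin{align*}
\Pr[\cM'(X) = o] = \prod_{i=1}^{k} \Pr[\cM(\Gamma(X)_i) = o_i],
\end{align*}
and analogously for $X'$. I would then take the ratio $\Pr[\cM'(X)=o] / \Pr[\cM'(X')=o]$ and cancel factor-by-factor.

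By the hypothesis on $\Gamma$, there is some index $i^* \in [k]$ such that $\Gamma(X)_i = \Gamma(X')_i$ for every $i \neq i^*$, so every factor in the product with $i \neq i^*$ is identical in numerator and denominator and cancels exactly. Only the $i^*$-th factor survives, giving
\begin{align*}
\frac{\Pr[\cM'(X) = o]}{\Pr[\cM'(X') = o]} = \frac{\Pr[\cM(\Gamma(X)_{i^*}) = o_{i^*}]}{\Pr[\cM(\Gamma(X')_{i^*}) = o_{i^*}]}.
\end{align*}
Since $\Gamma(X)_{i^*} \sim \Gamma(X')_{i^*}$ by hypothesis and $\cM$ is $\eps$-DP with respect to $\sim$, this ratio is at most $e^{\eps}$, which is exactly the conclusion we want.

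The main thing to be careful about is the division step when $\Pr[\cM(\Gamma(X')_{i^*}) = o_{i^*}] = 0$: in that case $\eps$-DP of $\cM$ forces $\Pr[\cM(\Gamma(X)_{i^*}) = o_{i^*}] = 0$ too, and the corresponding factor in the product for $X$ also vanishes, so the inequality $\Pr[\cM'(X)=o] \le e^\eps \Pr[\cM'(X')=o]$ still holds trivially. A secondary, mostly notational, point is justifying that the $k$ invocations of $\cM$ in $\cM'$ use independent randomness, which is the standard interpretation; if one wanted to be very formal one could write $\cM'$ as drawing a vector of independent coin strings $(r_1, \dots, r_k)$ and returning $(\cM(\Gamma(X)_i; r_i))_{i \in [k]}$, after which the factorization above is immediate. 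Apart from this, the argument is entirely a product-measure calculation and poses no real obstacle.
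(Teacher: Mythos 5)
Your proposal is correct and follows essentially the same route as the paper: factor the output distribution of $\cM'$ as a product over coordinates, observe that all factors with $i \neq i^*$ coincide for $X$ and $X'$, and bound the remaining $i^*$-th factor by $e^\eps$ using the $\eps$-DP guarantee of $\cM$ together with $\Gamma(X)_{i^*} \sim \Gamma(X')_{i^*}$. The only cosmetic difference is that the paper multiplies the single bounded factor back into the product rather than taking a ratio, which sidesteps the zero-denominator case you correctly handle separately.
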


\paragraph{Group Privacy.}
While differential privacy offers protection primarily against the adjacency notion for which it is defined, it also offers protection against more general adjacency notions. 
Below we state one such protection, which is often referred to as \emph{group differential privacy}.

Let $\sim$ be any adjacency relation. 
For $k \in \N$, let us define $\sim_k$ as the adjacency relation where $X \sim_k X'$ if and only if there exists a sequence $X_0, \dots, X_k$ such that $X_0 = X, X_k = X'$, and $X_{i - 1} \sim X_{i}$ for all $i \in [k]$.

\begin{lemma}[Group Differential Privacy, e.g.,~\cite{Vadhan17}] \label{lem:group-dp}
Let $k\in\N$.
If an algorithm is $\eps$-DP with respect to an adjacency notion $\sim$, it is also $(k\eps)$-DP with respect to the adjacency notion $\sim_k$.
\end{lemma}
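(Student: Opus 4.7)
The plan is to prove the lemma by a straightforward telescoping argument along the chain of $\sim$-adjacent inputs guaranteed by the definition of $\sim_k$.

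Fix any pair $X, X'$ with $X \sim_k X'$. By definition of $\sim_k$, there exists a sequence $X_0, X_1, \dots, X_k$ with $X_0 = X$, $X_k = X'$, and $X_{i-1} \sim X_i$ for every $i \in [k]$. Fix any output $o \in \range(\cM)$. The key step is to apply the $\eps$-DP guarantee to each consecutive pair $(X_{i-1}, X_i)$ to obtain
\begin{align*}
\Pr[\cM(X_{i-1}) = o] \;\leq\; e^\eps \cdot \Pr[\cM(X_i) = o],
\end{align*}
which is valid because $X_{i-1} \sim X_i$.

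Next I would chain these $k$ inequalities together, yielding
\begin{align*}
\Pr[\cM(X) = o] \;=\; \Pr[\cM(X_0) = o] \;\leq\; e^{k\eps} \cdot \Pr[\cM(X_k) = o] \;=\; e^{k\eps} \cdot \Pr[\cM(X') = o].
\end{align*}
Since $X, X'$ were an arbitrary $\sim_k$-adjacent pair and $o$ was arbitrary, this is exactly the $(k\eps)$-DP condition with respect to $\sim_k$.

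There is no real obstacle here; the only point that requires any attention is simply to invoke the chain whose existence is built into the definition of $\sim_k$, and to multiply the per-step $e^\eps$ factors. No additional DP tools (composition, post-processing, etc.) are needed, and the argument goes through verbatim for any adjacency notion $\sim$ since we only used the abstract $\eps$-DP condition along adjacent pairs.
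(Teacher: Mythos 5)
Your proof is correct: the telescoping of the $\eps$-DP inequality along the chain $X_0,\dots,X_k$ guaranteed by the definition of $\sim_k$ is exactly the standard argument for group privacy, which is why the paper states this lemma with a citation rather than a proof. Nothing is missing.
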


As an immediate consequence of \Cref{lem:group-dp}, any (agent $\times$ item)-level $\eps$-DP algorithm is also agent-level $(m\eps)$-DP. 
However, the factor $m\varepsilon$ makes the latter guarantee rather weak, especially as $m$ grows.

\paragraph{Sensitivity.} We next define the \emph{sensitivity} of a function, which will be used multiple times in this work. 
Note that the definition depends on the adjacency notion, but we do not explicitly include it in the notation for convenience.

\begin{definition}
The \emph{sensitivity} of a function $f: \cX \to \R$ (with respect to adjacency notion $\sim$) is defined as $\Delta(f) := \max_{X \sim X'} |f(X) - f(X')|$.
\end{definition}

Sensitivity is a key notion in DP.
As shown by~\citet{DworkMcNi06}, the \emph{Laplace mechanism}---which outputs $f(X) + Z$ where $Z$ is drawn from the Laplace distribution\footnote{The Laplace distribution with scale $b$ is the distribution whose probability density function is proportional to $\exp(-|x|/b)$.} with scale $(\Delta(f) / \eps)$---satisfies $\eps$-DP. 
This means that if a function has low sensitivity, then we can estimate it to within a small error (with high probability).

\paragraph{Sparse Vector Technique.}
We will use the so-called \emph{sparse vector technique (SVT)}. 
The setting is that there are low-sensitivity functions $f_1, \dots, f_H: \cX \to \R$. We want to find the first function $f_i$ whose value is above a target threshold. 
A straightforward approach would be to add Laplace noise to each $f_i(X)$ and then select the function accordingly; due to the basic composition theorem, this would require us to add noise of scale $O(H/\eps)$ to each function. 
SVT allows us to reduce the dependency on $H$ to merely $O(\log H)$. 
The technique was first introduced by \citet{DworkNaRe09}, and the convenient version below is due to~\citet[Theorem 3.24]{DworkRo14}.

\begin{theorem} \label{thm:svt}
There exists a constant $\upsilon > 0$ such that the following holds.
Let $f_1, \dots, f_H: \cX \to \R$ be functions with $\Delta(f_1), \dots, \Delta(f_H) \leq 1$. 
For any $\eps > 0$, $\beta\in(0,1)$, and $\tau \in \R$, there exists an $\eps$-DP algorithm such that, if $\max_h f_h(X) \geq \tau$, then, with probability at least $1 - \beta$, the algorithm outputs $h^* \in [H]$ with the following properties:
\begin{itemize}
\item $f_{h^*}(X) \geq \tau - \upsilon \cdot \log(H/\beta) / \eps$;
\item For all $h' < h^*$, $f_{h'}(X) \leq \tau + \upsilon \cdot \log(H/\beta) / \eps$. 
\end{itemize}
\end{theorem}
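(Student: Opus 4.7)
The plan is to instantiate the classical AboveThreshold (sparse vector) algorithm. Sample a noisy threshold $\hat\tau := \tau + Z_0$ with $Z_0 \sim \Lap(2/\eps)$, then scan $h = 1, 2, \dots, H$ in order, drawing a fresh $Z_h \sim \Lap(4/\eps)$ and testing whether $f_h(X) + Z_h \ge \hat\tau$; output the first index for which the test passes (and an arbitrary failure symbol otherwise). The constant $\upsilon$ will emerge from the Laplace tail bound.

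Utility is a direct concentration argument. By the inequality $\Pr[|\Lap(b)| > t] = e^{-t/b}$, a union bound over the $H+1$ noise variables shows that all $|Z_h|$ are simultaneously at most $\upsilon \log(H/\beta)/\eps$ with probability at least $1-\beta$, for a sufficiently large absolute constant $\upsilon$. On this event, if the algorithm outputs $h^*$, then $f_{h^*}(X) + Z_{h^*} \ge \hat\tau$ gives $f_{h^*}(X) \ge \tau - 2\upsilon\log(H/\beta)/\eps$, while every rejected $h' < h^*$ satisfies $f_{h'}(X) \le \tau + 2\upsilon\log(H/\beta)/\eps$. Moreover, whenever $\max_h f_h(X) \ge \tau$, the same event forces the first index attaining the maximum to trigger the test, so a valid output is produced. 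Rescaling $\upsilon$ yields the claimed bounds.

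The delicate part is privacy. Fix adjacent inputs $X, X'$ and a candidate output $h^* \in [H]$. Integrating over $Z_0, Z_1, \dots, Z_{h^*}$, the probability of outputting exactly $h^*$ is the expectation, under the Laplace densities, of the product of (i) the indicator that every earlier test fails and (ii) the indicator that the $h^*$-th test succeeds. The key observation is that each failure event $\{f_h(X) + Z_h < \hat\tau\}$ is monotone in $\hat\tau$: coupling all rejected noises identically across $X$ and $X'$ and shifting $\hat\tau$ (equivalently $Z_0$) by at most $1$ in the appropriate direction simultaneously dominates every rejection probability, costing a factor of $e^{\eps/2}$ from the density of $\Lap(2/\eps)$. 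Independently, shifting $Z_{h^*}$ by at most $1$ absorbs the change in $f_{h^*}$, costing another $e^{\eps/2}$ from $\Lap(4/\eps)$; the factor $2$ in that scale handles the fact that both $Z_{h^*}$ and $\hat\tau$ move between the two executions. Multiplying the two factors gives $e^\eps$, establishing $\eps$-DP.

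The main obstacle is orchestrating this coupling cleanly. A naive analysis would charge a privacy cost per query and thus require noise of scale $\Theta(H/\eps)$; the whole point of SVT is that, by monotonicity of the rejection events in $\hat\tau$, the many rejected queries can all be handled through a \emph{single} shift of the threshold noise $Z_0$, which is what permits the $\log H$ dependence in the error rather than $\poly(H)$.
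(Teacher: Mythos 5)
The paper does not actually prove this statement: it is quoted (up to notation) from Theorem~3.24 of Dwork and Roth, so your proposal is being measured against the standard AboveThreshold analysis rather than against anything in the text. Your privacy argument is that standard analysis and is sound: a single change of variables in the threshold noise $Z_0$ (shift by $1$, costing $e^{\eps/2}$ from the $\Lap(2/\eps)$ density) dominates all rejection events simultaneously, and a shift of at most $2$ in $Z_{h^*}$ (costing $e^{\eps/2}$ from the $\Lap(4/\eps)$ density) absorbs both the threshold shift and the change in $f_{h^*}$; this is precisely why the noise scale, and hence the error, depends on $H$ only logarithmically. (Your phrase ``shifting $Z_{h^*}$ by at most $1$'' should read ``at most $2$,'' as your own remark about the scale $4/\eps$ implicitly acknowledges.)

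The genuine gap is in the utility step asserting that, on the event where every noise is at most $\upsilon\log(H/\beta)/\eps$ in magnitude, the first index attaining the maximum ``triggers the test.'' With the noisy threshold $\hat\tau = \tau + Z_0$, the test at the argmax $h_{\max}$ reads $f_{h_{\max}}(X) + Z_{h_{\max}} \ge \tau + Z_0$; under the hypothesis $f_{h_{\max}}(X) \ge \tau$ this reduces to $Z_{h_{\max}} \ge Z_0 - (f_{h_{\max}}(X) - \tau)$, which boundedness of the noises does not imply. Concretely, take $H = 1$ and $f_1(X) = \tau$: the test passes only when $Z_1 \ge Z_0$, i.e.\ with probability $1/2$, so your algorithm emits the failure symbol with constant probability and the theorem's conclusion---which demands a valid $h^* \in [H]$ with probability $1 - \beta$---fails. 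The repair is standard and cheap: run AboveThreshold against the lowered threshold $\tau - 2\upsilon' \log(H/\beta)/\eps$, where $\upsilon'\log(H/\beta)/\eps$ is your per-noise tail bound. On the good event the scan is then guaranteed to halt by the first $h$ with $f_h(X) \ge \tau$, any accepted index satisfies $f_{h^*}(X) \ge \tau - 4\upsilon'\log(H/\beta)/\eps$, and any rejected index satisfies $f_{h'}(X) \le \tau$; absorbing the shift into the constant $\upsilon$ yields exactly the stated theorem.
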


\paragraph{Exponential Mechanism.}

We will also use the \emph{exponential mechanism (EM)} of \citet{McSherryTa07}.
In its generic form, EM allows us to select a solution from a candidate set $\cH$. Specifically, we may define (low-sensitivity) scoring functions $\scr_h: \cX \to \R$ for each $h \in \cH$. Then, EM outputs a solution that approximately maximizes the score. The precise statement is given below.\footnote{The formulation here can be derived, e.g., by plugging $t = \log(1/\beta)$ into Corollary~3.12 of \citet{DworkRo14}.}

\begin{theorem}[\cite{McSherryTa07}] \label{thm:EM}
For any $\eps > 0$ and $\beta\in(0,1)$, a finite set $\cH$, and a set of scoring functions $\{\scr_h\}_{h \in \cH}$ such that $\Delta(\scr_h) \leq 1$ for each $h \in \cH$, there is an $\eps$-DP algorithm that, on every input~$X$, outputs $h^*$ such that
\begin{align*}
\scr_{h^*}(X) \geq \max_{h \in \cH} \scr_h(X) - \frac{2\log(|\cH| / \beta)}{\eps}
\end{align*}
with probability at least $1 - \beta$.
\end{theorem}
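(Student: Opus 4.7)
The plan is to exhibit the standard exponential mechanism and then verify privacy and utility separately. Specifically, on input $X$, I would sample $h^* \in \cH$ with probability
\begin{align*}
\Pr[h^* = h \mid X] = \frac{\exp(\eps \cdot \scr_h(X)/2)}{\sum_{h' \in \cH} \exp(\eps \cdot \scr_{h'}(X)/2)}.
\end{align*}
The factor $\eps/2$ (rather than $\eps$) in the exponent is the key design choice that makes the privacy accounting work out cleanly.

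For the privacy claim, fix any pair of adjacent inputs $X, X'$ and any target $h \in \cH$, and consider the ratio $\Pr[h^* = h \mid X] / \Pr[h^* = h \mid X']$. I would decompose this as the product of the numerator ratio and the (reciprocal) denominator ratio. Because $\Delta(\scr_h) \leq 1$, we have $|\scr_h(X) - \scr_h(X')| \leq 1$, so the numerator ratio is at most $\exp(\eps/2)$. Applying the same bound termwise inside the partition function and pulling out a common factor shows that the denominator ratio is likewise sandwiched in $[\exp(-\eps/2), \exp(\eps/2)]$. Multiplying the two contributions yields the desired $e^\eps$ bound, establishing $\eps$-DP.

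For the utility claim, set $t := 2\log(|\cH|/\beta)/\eps$ and let $h_{\mathrm{opt}} \in \argmax_{h \in \cH} \scr_h(X)$. I would upper bound the probability that the mechanism outputs a ``bad'' candidate $h$ with $\scr_h(X) < \scr_{h_{\mathrm{opt}}}(X) - t$. The total weight assigned to bad candidates is at most $|\cH| \cdot \exp(\eps(\scr_{h_{\mathrm{opt}}}(X) - t)/2)$, while the total weight is at least $\exp(\eps \cdot \scr_{h_{\mathrm{opt}}}(X)/2)$ (from the $h_{\mathrm{opt}}$ term alone). The ratio telescopes to $|\cH| \exp(-\eps t/2)$, which equals $\beta$ by our choice of $t$. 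Hence with probability at least $1 - \beta$, the mechanism outputs some $h^*$ with $\scr_{h^*}(X) \geq \scr_{h_{\mathrm{opt}}}(X) - t$, exactly as claimed.

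Honestly, there is no real obstacle here: the proof is entirely about choosing the mechanism correctly, and once the $\eps/2$ scaling is in place both halves of the argument are short algebraic manipulations. The only thing to be slightly careful about is that the utility argument uses only the single term $h_{\mathrm{opt}}$ as a lower bound for the partition function (rather than all of $\cH$), which is what prevents any dependence on $|\cH|$ from creeping into the denominator and worsening the additive gap.
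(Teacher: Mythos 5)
Your proof is correct and is the standard argument for the exponential mechanism: the $\eps/2$ scaling of the score, the two-factor decomposition of the likelihood ratio for the privacy bound, and the one-term lower bound on the partition function for the utility bound are all exactly the textbook treatment. The paper does not reprove this result; it cites McSherry and Talwar and notes in a footnote that the stated form follows from Corollary~3.12 of Dwork and Roth by substituting $t = \log(1/\beta)$, which yields the same $2\log(|\cH|/\beta)/\eps$ additive gap you derive.
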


\subsection{Anti-Concentration Inequalities}

Denote by $\Ber(1/2)$ the distribution that is $0$ with probability $1/2$, and $1$ otherwise. The following type of anti-concentration inequalities is well-known; for completeness, we provide a proof of this version in \Cref{app:proof-littlewood-offord}.

\begin{lemma} \label{lem:littlewood-offord-simple}
If $k \geq 100$ and $X_1, \dots, X_k$ are independent random variables drawn from $\Ber(1/2)$, then 
\[
\Pr\left[\sum_{i=1}^k X_i < \frac{k}{2} - 0.1\sqrt{k}\right] \geq \frac{1}{4}.
\]
\end{lemma}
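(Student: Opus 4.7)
The plan is to exploit the symmetry of the Binomial$(k,1/2)$ distribution around $k/2$ together with a point-probability bound derived from Stirling's approximation. Let $S = \sum_{i=1}^k X_i$. Because $S$ and $k - S$ have the same distribution, $S$ is symmetric about $k/2$, so
\[
\Pr\!\left[S < \tfrac{k}{2} - 0.1\sqrt{k}\right] = \Pr\!\left[S > \tfrac{k}{2} + 0.1\sqrt{k}\right],
\]
and hence
\[
\Pr\!\left[S < \tfrac{k}{2} - 0.1\sqrt{k}\right] = \frac{1 - \Pr\!\left[|S - k/2| \le 0.1\sqrt{k}\right]}{2}.
\]
It therefore suffices to show that the ``middle'' probability is at most $1/2$; in fact, I will aim for the cleaner bound $0.3$, which yields $(1-0.3)/2 = 0.35 > 1/4$ with a bit of slack.

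Next, I would bound the point probabilities uniformly. For any $j$, $\Pr[S=j] = \binom{k}{j}/2^k \le \binom{k}{\lfloor k/2\rfloor}/2^k$. Using the standard Stirling-type inequality $\binom{2n}{n} \le 4^n/\sqrt{\pi n}$ (and the analogous bound for odd $k$ obtained from $\binom{k}{\lfloor k/2\rfloor} = \binom{k-1}{\lfloor (k-1)/2\rfloor} + \binom{k-1}{\lceil (k-1)/2 \rceil}$), one obtains $\binom{k}{\lfloor k/2\rfloor}/2^k \le 1/\sqrt{k}$ for all $k \ge 1$. Thus every point mass of $S$ is at most $1/\sqrt{k}$.

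Finally, I would count integers. The number of integers in $[k/2 - 0.1\sqrt{k},\, k/2 + 0.1\sqrt{k}]$ is at most $0.2\sqrt{k} + 1$, which is $\le 0.3\sqrt{k}$ whenever $\sqrt{k} \ge 10$, i.e., $k \ge 100$. A union bound then gives
\[
\Pr\!\left[|S - k/2| \le 0.1\sqrt{k}\right] \le 0.3\sqrt{k} \cdot \frac{1}{\sqrt{k}} = 0.3,
\]
completing the argument via the identity displayed above.

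The only delicate step is the point-probability bound on $\binom{k}{\lfloor k/2\rfloor}/2^k$; for $k$ even it is the classical central binomial coefficient inequality, and for $k$ odd it follows by a one-line Pascal identity. The rest is bookkeeping, and the slack $0.35$ versus the required $0.25$ is comfortable enough that I do not need sharp constants in Stirling. I expect the main care point to be checking the small cases at $k = 100$ rather than any deep inequality.
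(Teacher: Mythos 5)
Your proof is correct and follows essentially the same structure as the paper's: symmetry of the binomial about $k/2$, bounding the window $[k/2-0.1\sqrt{k},\,k/2+0.1\sqrt{k}]$ by at most $0.3\sqrt k$ integers, and a Stirling-type bound on the central point mass $\binom{k}{\lfloor k/2\rfloor}/2^k$. The only (cosmetic) difference is that you invoke the classical bound $\binom{2n}{n}\le 4^n/\sqrt{\pi n}$ to get $1/\sqrt k$, while the paper derives $1.6/\sqrt k$ directly from Robbins' form of Stirling; both give ample slack.
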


We will also use the following anti-concentration inequality, whose proof can be found in \Cref{app:proof-anti-concen}.

\begin{lemma} \label{lem:anti-concen}
Let $k \geq 100$ and let $X_1, \dots, X_k$ be independent $\Ber(1/2)$ random variables.
Also, let $S := X_1 + \dots + X_k$ and $\gamma \in [2, 2^{k/4}]$. Then,
\begin{align*}
\Pr\left[S > \frac{k}{2} + 0.1\sqrt{k \log \gamma}\right]  \geq \frac{0.1}{\gamma}.
\end{align*}
\end{lemma}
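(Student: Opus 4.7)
The plan is to obtain a sharp pointwise (local) lower bound on the probability mass of $S$ via Stirling's formula, then sum over integer shifts in a window just above the target threshold $\tau := 0.1\sqrt{k\log\gamma}$.

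First, I would verify that the relevant shifts lie in the bulk of the distribution. Since $\gamma \leq 2^{k/4}$, one has $\tau^2/k \leq 0.01\cdot (k/4)\log 2$, so $2\tau \leq k/10$. This places the entire window $(\tau, 2\tau]$ comfortably inside the range $|j|\leq k/4$, where Stirling's approximation is accurate and moreover the window is non-empty (a single integer shift $j=1$ already falls in $(\tau,2\tau]$ at the boundary case $k=100,\gamma=2$, where $\tau\approx 0.83$).

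Next, a standard Stirling estimate yields an absolute constant $c_1>0$ (close to $\sqrt{2/\pi}$) such that, for all $k\geq 100$ and all integer (or half-integer, if $k$ is odd) $j$ with $|j|\leq k/4$,
\[
\Pr[S = k/2 + j] \;\geq\; \frac{c_1}{\sqrt{k}}\exp\!\left(-\frac{2j^2}{k}\right),
\]
since the quartic correction $O(j^4/k^3)$ in the entropy expansion is bounded in this range and can be absorbed into $c_1$. Summing this bound over the $\Omega(\tau)$ integers in the window $(\tau,2\tau]$ and using the monotone estimate $\exp(-2j^2/k)\geq \exp(-8\tau^2/k)$ for $j\leq 2\tau$,
\[
\Pr[S > k/2 + \tau] \;\geq\; c_2\cdot\frac{\tau}{\sqrt{k}}\cdot\exp\!\left(-\frac{8\tau^2}{k}\right) \;=\; c_2\sqrt{\log\gamma}\cdot \gamma^{-c_3}
\]
for small absolute constants $c_2,c_3>0$, using $\tau^2/k = 0.01\log\gamma$. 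Since $\sqrt{\log\gamma}\cdot \gamma^{1-c_3}$ is increasing in $\gamma$ and the constants can be arranged so that it already exceeds $0.1/c_2$ at $\gamma=2$, the bound $0.1/\gamma$ then follows across the whole range $\gamma\in[2,2^{k/4}]$.

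The main obstacle is pinning down the constants at the left endpoint $\gamma=2$, where the window $(\tau,2\tau]$ contains only a single integer and the inequality is tightest; here one relies on the explicit Stirling constant and on the assumption $k\geq 100$, which explains why the concrete factor $0.1$ (rather than something larger) appears in the statement. An alternative and conceptually cleaner route would be to apply Slud's inequality $\Pr[S\geq k/2+\tau]\geq 1-\Phi(2\tau/\sqrt{k})$ together with the Mills-type lower bound $1-\Phi(x)\geq \frac{x}{1+x^2}\varphi(x)$ on the standard normal tail; this gives the same conclusion with essentially no case analysis at the price of invoking a less elementary ingredient.
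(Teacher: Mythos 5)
Your overall plan---a pointwise local lower bound on $\Pr[S = k/2 + j]$ via Stirling's formula, summed over a window just above the threshold---is the same strategy the paper uses. The gap is in the pointwise estimate itself. You claim an absolute constant $c_1 > 0$ such that $\Pr[S = k/2 + j] \geq \frac{c_1}{\sqrt{k}}\exp(-2j^2/k)$ for all $k \geq 100$ and $|j| \leq k/4$, on the grounds that the quartic correction $O(j^4/k^3)$ in the entropy expansion ``is bounded in this range.'' It is not: when $j$ is a constant fraction of $k$, the term $j^4/k^3$ is $\Theta(k)$, not bounded, and this regime is genuinely reached in your window---at $\gamma = 2^{k/4}$ one has $2\tau \approx 0.08 k$. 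The true mass decays like $\exp(-2j^2/k - \Theta(j^4/k^3))$, so the claimed bound fails by a factor exponentially small in $k$. Concretely, at $j = k/4$ the true mass $\binom{k}{3k/4}2^{-k}$ is roughly $2^{-0.189 k}$ (up to polynomial factors), whereas $\exp(-2(k/4)^2/k) = 2^{-k/(8 \ln 2)} \approx 2^{-0.180 k}$; the ratio goes to zero, so no absolute $c_1$ exists.

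The paper avoids this by never Taylor-expanding the entropy. It keeps the exact Stirling form for $\binom{k}{b}2^{-k}$, factors the main term as $\left(\frac{k}{2b}\right)^{2b-k} \cdot \left(\frac{k^2}{4b(k-b)}\right)^{k-b}$, drops the second factor (which is $\geq 1$ by AM--GM), and bounds $\left(\frac{k}{2b}\right)^{2b-k} = (1 + 2r/k)^{-2r} \geq e^{-4 r^2 / k}$ using only $1 + x \leq e^x$. Note the exponent is $4$, not $2$: this weaker Gaussian-type bound is the one that holds uniformly for all $r$, and it is exactly what the window choice $r \leq 0.5\sqrt{k\log\gamma}$ is tuned to, giving $e^{-4r^2/k} \geq 1/\gamma$. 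Your argument is repairable---replace $\exp(-2j^2/k)$ by $\exp(-4j^2/k)$ throughout; the resulting $\gamma^{-0.16}$ in place of $\gamma^{-0.08}$ still leaves $c_2\sqrt{\log\gamma}\,\gamma^{0.84} \geq 0.1$ comfortably satisfied---but as written the claimed local bound is false. One more caution on your proposed alternative: the classical form of Slud's inequality requires $p \leq 1/4$ or $a \leq n(1-p)$, so it does not directly apply with $p = 1/2$ and $a > k/2$; you would need a variant tailored to the symmetric case.
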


\section{Agent-Level DP}
\label{sec:agent-level}

We begin by considering the demanding notion of agent-level DP, and provide strong negative results for this notion. 

For EF$c$, we show a lower bound that,\footnote{Unless specified otherwise, $\log$ refers to the natural logarithm.} when $m > n\log n$, holds even against $c = \Theta\left(\sqrt{\frac{m}{n} \log n}\right)$. 

\begin{theorem} \label{thm:agent-level-ef-lb}
There exists a constant $\zeta > 0$ such that, for any $\eps > 0$, there is no agent-level $\eps$-DP algorithm that, for any input binary additive utility functions, outputs an EF$c$ allocation with probability higher than $1 - \frac{e^{-\eps}}{200}$, where $c = \left\lfloor\zeta \sqrt{\frac{m}{n} \cdot \min\left\{\log n, \frac{m}{n}\right\}}\right\rfloor$.
\end{theorem}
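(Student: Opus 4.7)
My plan is to prove the theorem by contradiction. Suppose there were an agent-level $\eps$-DP algorithm $\mathcal{M}$ that outputs an EF$c$ allocation with probability greater than $1 - e^{-\eps}/200$ on every binary additive input.

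First I set up the hard input family. Let $I_0$ be the input where every agent has utility $\mathbf{1}^m$ (i.e., values every item), and for each $u \in \{0,1\}^m$ let $I(u)$ be the input obtained from $I_0$ by replacing agent $n$'s utility with $u$; every such $I(u)$ is agent-level 1-adjacent to $I_0$. Because agents $1, \ldots, n-1$ value all items equally, any EF$c$ allocation for $I(u)$ must have bundle sizes within $c$ of each other (so every $|A_k|$ is in an interval of width $O(c)$ around $m/n$), and agent $n$'s condition reduces to $u(A_n) \geq \max_{k \in [n]} u(A_k) - c$.

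Next I use anti-concentration to control $p(A) := \Pr_{u \sim \mathrm{Unif}(\{0,1\}^m)}[A \in V(I(u))]$ for allocations $A$ with balanced bundle sizes. Since the $A_k$'s are disjoint, the values $u(A_1), \ldots, u(A_n)$ are independent binomial random variables. In the regime $m/n \geq \log n$, applying \Cref{lem:anti-concen} with $\gamma = n$ to each $u(A_k)$ yields $\Pr[u(A_k) > \tfrac{m}{2n} + 0.1\sqrt{(m/n)\log n}] \geq 0.1/n$, so by independence across the $n-1$ bundles $\{A_k : k \neq n\}$, the maximum $\max_{k \neq n} u(A_k)$ exceeds this threshold with a constant probability. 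Combining with \Cref{lem:littlewood-offord-simple} applied to $u(A_n)$ (which makes $u(A_n) \leq m/(2n)$ with probability at least $1/4$), and using independence of the two events, we obtain that with a constant probability $\max_{k \neq n} u(A_k) - u(A_n) > c$ whenever $c \leq \zeta \sqrt{(m/n)\log n}$ for sufficiently small $\zeta$. This violates agent $n$'s EF$c$ condition and gives $p(A) \leq 1 - q$ for some absolute constant $q > 0$. The regime $m/n < \log n$ is handled analogously (the bound $c \leq \zeta \cdot m/n$ there follows directly from \Cref{lem:littlewood-offord-simple}).

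Finally I combine DP with the anti-concentration bound. By $\eps$-DP between $I_0$ and $I(u)$, for every $u$,
\[
  \Pr[\mathcal{M}(I_0) \in V(I(u))] \;\geq\; e^{-\eps}\,\Pr[\mathcal{M}(I(u)) \in V(I(u))] \;>\; e^{-\eps}\bigl(1 - e^{-\eps}/200\bigr).
\]
Averaging over $u \sim \mathrm{Unif}(\{0,1\}^m)$ and using linearity,
\[
  \max_A p(A) \;\geq\; \sum_{A} \Pr[\mathcal{M}(I_0) = A]\cdot p(A) \;=\; \mathbb{E}_u\Pr[\mathcal{M}(I_0) \in V(I(u))] \;>\; e^{-\eps}\bigl(1 - e^{-\eps}/200\bigr).
\]
Combining with the anti-concentration upper bound $\max_A p(A) \leq 1 - q$, we obtain $1 - q > e^{-\eps}(1 - e^{-\eps}/200)$, which gives the desired contradiction once the constants are set appropriately.

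The main obstacle I anticipate is making the contradiction tight uniformly across all $\eps > 0$. The straightforward pairing above gives a contradiction for $\eps$ bounded by a constant (essentially $\eps \lesssim \log n$); for larger $\eps$, the derived inequality $e^{-\eps}(1 - e^{-\eps}/200) < 1 - q$ is not automatically violated by a constant $q$, so one needs a sharper quantitative bound on $p(A)$. I expect this to require choosing the parameter $\gamma$ in \Cref{lem:anti-concen} as a function of $\eps$ and $n$ (so that the probability lower bound $0.1/\gamma$ and the deviation $0.1\sqrt{(m/n)\log\gamma}$ are jointly calibrated against both the DP penalty $e^{\eps}$ and the threshold $c$), and carefully tracking constants so that the balance $q > 1 - e^{-\eps}(1-e^{-\eps}/200)$ is maintained throughout the range of $\eps$.
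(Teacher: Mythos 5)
Your construction and anti-concentration ideas are in the right spirit (the paper likewise fixes a simple baseline input, draws a random binary utility for one distinguished agent, and applies anti-concentration to a difference of bundle values), but the way you wire up the DP step has a genuine gap that your proposed fix would not repair.

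The problem is the direction of the DP inequality. You derive $1 - q > e^{-\eps}\bigl(1 - e^{-\eps}/200\bigr)$ and hope this is contradictory, but this inequality is simply \emph{true} whenever $e^{-\eps}\bigl(1 - e^{-\eps}/200\bigr) < 1 - q$, which already happens for $\eps$ on the order of $q$ (roughly $\eps \gtrsim 0.02$ with a constant $q \approx 0.02$, not $\eps \lesssim \log n$ as you estimated). Your suggested remedy of tuning $\gamma$ in \Cref{lem:anti-concen} cannot resolve this: $q$ is a probability, so $1-q \geq 0$ always, whereas for a contradiction you would need $1-q \leq e^{-\eps}\bigl(1 - e^{-\eps}/200\bigr)$, whose right-hand side tends to $0$. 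No choice of $\gamma$ can drive the anti-concentration probability $q$ to $1$.

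The correct repair is to route the DP inequality the other way, so that the factor $e^{-\eps}$ lands on the failure probability of the perturbed input rather than on the success probability of the baseline. From your bound $p(A) \leq 1 - q$ for every admissible (balanced) $A$, together with $p(A) = 0$ for inadmissible $A$, you get
\[
\E_u\Bigl[\Pr\bigl[\cM(I_0) \notin V(I(u))\bigr]\Bigr] \;=\; 1 - \sum_A \Pr[\cM(I_0)=A]\,p(A) \;\geq\; q,
\]
hence there exists a specific $u^*$ with $\Pr\bigl[\cM(I_0) \notin V(I(u^*))\bigr] \geq q$. Since $I_0$ and $I(u^*)$ are agent-level adjacent, the $\eps$-DP guarantee yields $\Pr\bigl[\cM(I(u^*)) \notin V(I(u^*))\bigr] \geq e^{-\eps} q$. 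On the input $I(u^*)$ the event ``not in $V(I(u^*))$'' \emph{is} the failure event, so the algorithm fails with probability at least $e^{-\eps}q$. Provided $q \geq 1/200$, this contradicts the assumed success probability $> 1 - e^{-\eps}/200$ simultaneously for all $\eps > 0$ --- the $e^{-\eps}$ factor from DP now tracks the $e^{-\eps}$ in the theorem's threshold, so no case split in $\eps$ is needed. This is exactly the DP structure the paper's proof uses: it first shows that on a fixed baseline input the algorithm outputs an allocation that fails EF$c$ (with respect to a concrete alternative utility for a single agent) with constant probability, and only then applies DP once to move to the adjacent input.

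Two secondary remarks. First, the paper uses the all-zero baseline rather than all-ones; since the all-zero baseline places no EF constraint on $\cM(\bu')$, the paper instead identifies the distinguished agent via the $\rank$ function (at least half the agents have rank $\geq (n-1)/2$ under every allocation, so one of them must frequently receive a ``small'' bundle). Your all-ones baseline achieves the analogous effect more directly by forcing near-balanced bundle sizes, which is a reasonable alternative. Second, once you flip the DP direction you need $q \geq 1/200$, so the constants in your anti-concentration step (and the applicability conditions $|A_k| \geq 100$ and $\gamma \leq 2^{|A_k|/4}$ of \Cref{lem:anti-concen}) have to be tracked; the paper's Lemma~\ref{lem:agent-level-ef-lb} carries this out and, combined with the $1/2$ from the rank argument, gives exactly the $1/200$.
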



For proportionality, we prove a slightly weaker bound where $c = \Theta(\sqrt{m/n})$ and the ``failure probability'' required for the lower bound to apply is also smaller at $O_\eps(1/n)$ (compared to $O_\eps(1)$ for envy-freeness).

\begin{theorem} \label{thm:agent-level-prop-lb}
There exists a constant $\zeta > 0$ such that, for any $\eps > 0$, there is no agent-level $\eps$-DP algorithm that, for any input binary additive utility functions, outputs a PROP$c$ allocation with probability higher than $1 - \frac{e^{-\eps}}{8n}$, where $c = \lfloor\zeta \sqrt{m/n}\rfloor$.
\end{theorem}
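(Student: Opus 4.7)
The approach mirrors that of \Cref{thm:agent-level-ef-lb}: set up a baseline instance, use agent-level DP to transfer the failure constraint from a perturbed input to the baseline, and then invoke anti-concentration to lower-bound the baseline failure probability. A pigeonhole over the $n$ agents is what introduces the extra factor of $1/n$ in the failure probability. Let $X^0$ denote the instance in which every agent has the all-ones binary utility, and for each $r \in [n]$ and $v \in \{0,1\}^m$ let $X_r(v)$ be the instance obtained by replacing agent $r$'s utility with the indicator of $v$; each $X_r(v)$ is agent-level adjacent to $X^0$. A PROP$c$ allocation for $X^0$ lies in $G := \{A : |A_i| \geq m/n - c \text{ for all } i\}$, which forces $|A_i| \leq m/n + (n-1)c$ for every $i$, and a PROP$c$ allocation for $X_r(v)$ additionally satisfies $b - (n-1)a \leq nc$, where $a := |A_r \cap v|$ and $b := |v \setminus A_r|$.

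Assume toward contradiction that $\cM$ is agent-level $\eps$-DP with success probability strictly exceeding $1 - e^{-\eps}/(8n)$ on every input. Applied to $X^0$ this gives $\Pr[\cM(X^0) \in G] > 1 - e^{-\eps}/(8n)$. Define the partial failure set $F_r^G(v) := \{A \in G : b - (n-1)a > nc\}$, which is contained in the PROP$c$ failure set for $X_r(v)$; by agent-level DP,
\[
\Pr[\cM(X_r(v)) \in F_r^G(v)] \;\geq\; e^{-\eps}\,\Pr[\cM(X^0) \in F_r^G(v)],
\]
so it suffices to find $(r^*, v^*)$ satisfying $\Pr[\cM(X^0) \in F_{r^*}^G(v^*)] > 1/(8n)$, which I will do by averaging over uniform $r \in [n]$ and $v \in \{0,1\}^m$.

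For fixed $A \in G$, the failure probability at agent $r$ under uniform random $v$ depends only on $k_r := |A_r|$; denote it by $f(k_r)$, where $a \sim \Bin(k_r, 1/2)$ and $b \sim \Bin(m-k_r, 1/2)$ are independent. A short coupling argument (swapping one item into $A_r$ can only make $b - (n-1)a$ smaller) shows that $f$ is monotonically decreasing in $k_r$. For $\zeta$ taken sufficiently small, I claim $f(m/n) > 1/8$: the event $\{a \leq k/2 - 0.1\sqrt{k}\}$ has probability $\geq 1/4$ by \Cref{lem:littlewood-offord-simple}, independently of the event $\{b \geq (m-k)/2\}$, which has probability $\geq 1/2$ by Binomial symmetry, and under their conjunction
\[
b - (n-1)a \;\geq\; 0.1(n-1)\sqrt{k},
\]
which at $k = m/n$ strictly exceeds $nc = \zeta n \sqrt{m/n}$ whenever $\zeta < 0.1(n-1)/n$. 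Since $\sum_r k_r = m$ forces some $r^*$ with $k_{r^*} \leq m/n$, pigeonhole yields $\tfrac{1}{n}\sum_r f(k_r) \geq f(m/n)/n > 1/(8n)$ for every $A \in G$.

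Averaging the DP inequality above over $(r, v)$,
\[
\mathbb{E}_{r, v}\bigl[\Pr[\cM(X^0) \in F_r^G(v)]\bigr] = \mathbb{E}_{A \sim \cM(X^0)}\!\Bigl[\mathbf{1}[A \in G] \cdot \tfrac{1}{n}\sum_{r} f(k_r)\Bigr] \geq (1 - e^{-\eps}/(8n))\cdot f(m/n)/n,
\]
which (for $\zeta$ small enough that $f(m/n)$ strictly exceeds $1/(8(1 - e^{-\eps}/(8n)))$) is $>1/(8n)$. Thus some $(r^*, v^*)$ realizes $\Pr[\cM(X^0) \in F_{r^*}^G(v^*)] > 1/(8n)$, and DP transfer produces failure probability $>e^{-\eps}/(8n)$ on $X_{r^*}(v^*)$, contradicting the assumption. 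The main technical obstacle is the quantitative anti-concentration for $f(m/n)$: the crude product bound $(1/4)\cdot(1/2) = 1/8$ is just on the boundary and leaves no slack to absorb the DP factor $(1 - e^{-\eps}/(8n))$, so one needs either a sharper shift in one of the two events (exploiting, for instance, that $\Pr[b = (m-k)/2]$ contributes strictly to $\Pr[b \geq (m-k)/2]$ when $m-k$ is even) or a Berry--Esseen-style refinement giving $f(m/n)$ close to $1/2$ for sufficiently large $m$; this is what ultimately pins down the admissible constant $\zeta$.
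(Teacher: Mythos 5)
Your overall strategy is the same as the paper's: start from a degenerate baseline instance, use a deterministic pigeonhole to find an agent whose bundle has size at most $m/n$, apply anti-concentration (\Cref{lem:littlewood-offord-simple} on that agent's bundle times the symmetry bound $1/2$ on the complement) to show a random binary utility makes that agent fail PROP$c$ with probability at least $1/8$, and then transfer by one step of agent-level DP. The paper's baseline is the all-zero profile rather than your all-ones profile, but that difference is cosmetic. The substantive problem is the one you flag yourself at the end: by restricting the failure event to $F_r^G(v) = \{A \in G : b-(n-1)a > nc\}$, you pick up the multiplicative factor $\Pr[\cM(X^0)\in G] \geq 1 - e^{-\eps}/(8n)$, so your final bound is $(1-e^{-\eps}/(8n))/(8n)$, strictly below the $1/(8n)$ you need. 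Closing this by sharpening the anti-concentration is not routine either: the deficit you must recover is of order $e^{-\eps}/n$, while the natural gains (e.g.\ the point mass at $b=(m-k)/2$) are only of order $1/\sqrt{m}$, which is too small for large $m$. As written, the proof does not close.

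The gap is entirely self-inflicted, and removing it recovers the paper's argument. First, the event $\{b-(n-1)a>nc\}$ is by itself contained in the PROP$c$ failure set of $X_r(v)$: for a binary utility the best removable set $S\subseteq M\setminus A_r$ has value $\min(c,b)$, and one checks that failure forces $b>c$, so failure for agent $r$ is exactly $b-(n-1)a>nc$ with no reference to $G$. Second, the pigeonhole "some $k_{r^*}\leq m/n$" holds deterministically for \emph{every} allocation, not just those in $G$; you never actually use the upper bound $|A_i|\leq m/n+(n-1)c$. So define $F_r(v)$ without intersecting with $G$; then
\[
\E_{r,v}\bigl[\Pr[\cM(X^0)\in F_r(v)]\bigr] \;=\; \E_{A\sim\cM(X^0)}\Bigl[\tfrac{1}{n}\textstyle\sum_r f(k_r)\Bigr] \;\geq\; f(\lfloor m/n\rfloor)/n \;\geq\; 1/(8n),
\]
with no $(1-e^{-\eps}/(8n))$ loss, and the non-strict bound already contradicts "failure probability $< e^{-\eps}/(8n)$" after the DP transfer. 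This is precisely what the paper does; its all-zero baseline makes the point transparent because PROP$c$ there imposes no constraint at all, so there is no success set to condition on. (Two minor loose ends to also record: the case $m<100n$, where $c=0$ and the claim is the classical non-existence of proportional allocations, and the degenerate case of a bundle with fewer than $100$ items, which your monotonicity coupling for $f$ correctly handles.)
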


We first prove \Cref{thm:agent-level-prop-lb}, before proceeding to present the proof of \Cref{thm:agent-level-ef-lb}, which uses similar arguments but requires a more delicate anti-concentration inequality.

\subsection{Proof of \Cref{thm:agent-level-prop-lb}} \label{sec:agent-level-prop-lb}

We let $\zeta = 0.01$.
If $m < 100n$, then $c = 0$ and the theorem holds trivially even without the privacy requirement.
Hence, we may assume that $m \ge 100n$.
Throughout the proof, we consider random utility functions $\bu = (u_i)_{i \in N}$ where each $u_i(j)$ is an independent $\Ber(1/2)$ random variable. 
For brevity, we will not repeatedly state this in the calculations below.

We start by proving the following auxiliary lemma that if $A_i$ is small, then, for a random utility $\bu$ as above, the allocation fails to be PROP$c$ for agent $i$ with a constant probability.

\begin{lemma} \label{lem:agent-level-prop-lb}
For $\zeta = 0.01$, let $c$ be as in \Cref{thm:agent-level-prop-lb} and $A$ be an allocation such that $|A_i| \leq m/n$. Then, we have
\begin{align*}
\Pr_{\bu}[A \text{ is not PROP}c \text{ for agent } i] \geq 1/8.
\end{align*}
\end{lemma}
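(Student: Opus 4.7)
My plan is to reduce ``not PROP$c$'' to a simple inequality in the binary additive setting, and then establish the lower bound $1/8$ by a case analysis on $|A_i|$, using binomial symmetry and the anti-concentration inequality in \Cref{lem:littlewood-offord-simple}. Set $a := |A_i|$, $b := m - a$, $U_1 := u_i(A_i)$, and $U_2 := u_i(M \setminus A_i)$. Because the $X_j \sim \Ber(1/2)$ are independent, so are $U_1 \sim \Bin(a, 1/2)$ and $U_2 \sim \Bin(b, 1/2)$. In the binary additive setting, the best $S \subseteq M \setminus A_i$ with $|S| \le c$ has $u_i(S) = \min(c, U_2)$, so a direct rearrangement shows that $A$ fails PROP$c$ for agent $i$ if and only if $U_2 > c$ and $W > nc$, where $W := U_2 - (n-1) U_1$. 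Note that $\E[W] = (m - na)/2 \ge 0$.

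The first regime is $a \le m/n - 2c - 2/n$, equivalently $\E[W] \ge nc + 1$. Since the reflection $X_j \mapsto 1 - X_j$ maps $W$ to a random variable with the same distribution, $W$ is symmetric about $\E[W]$, so $\Pr[W \ge \E[W]] \ge 1/2$. Combined with $\E[W] \ge nc + 1$ and the integrality of $W$, this yields $\Pr[W > nc] \ge 1/2$. A standard Hoeffding bound shows $\Pr[U_2 \le c] = \exp(-\Omega(m))$, so $\Pr[\text{not PROP}c\text{ for }i] \ge 1/2 - \exp(-\Omega(m)) \ge 1/8$.

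The second regime is $a > m/n - 2c - 2/n$. Under $m \ge 100n$ and $c \le 0.01\sqrt{m/n}$, one checks $a \ge 100$, so \Cref{lem:littlewood-offord-simple} gives $\Pr[U_1 < a/2 - 0.1\sqrt{a}] \ge 1/4$. Define the deterministic threshold $\tau := (n-1)(a/2 - 0.1\sqrt{a}) + nc$. The key numerical check is $\tau \le b/2 - 1$: since
\[
b/2 - \tau \;=\; (m - na)/2 \,+\, 0.1(n-1)\sqrt{a} \,-\, nc,
\]
the first term is nonnegative, and using $a \ge 0.978\,m/n$ together with $n-1 \ge n/2$ (for $n \ge 2$), the second term is at least roughly $0.05\sqrt{mn}$, which comfortably beats $nc \le 0.01\sqrt{mn}$. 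Given $\tau \le b/2 - 1$, symmetry of $U_2$ about $b/2$ gives $\Pr[U_2 > \tau] \ge 1/2$. By independence of $U_1$ and $U_2$, the event $\{U_1 < a/2 - 0.1\sqrt{a}\} \cap \{U_2 > \tau\}$ has probability at least $1/8$, and on this event $(n-1)U_1 + nc < \tau < U_2$, so $W > nc$ and $U_2 > \tau \ge nc \ge c$; hence $A$ fails PROP$c$ for agent $i$.

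The main obstacle is the numerical check $\tau \le b/2 - 1$ in the second regime, which requires careful tracking of the constants ($\zeta = 0.01$, the $0.1$ from \Cref{lem:littlewood-offord-simple}, and the hypothesis $m \ge 100n$). For very small $n$ (e.g., $n = 2$ or $3$) with $mn$ near its allowed minimum, a brief separate treatment may be needed; in those boundary cases $c$ is typically $0$ or very small, and a direct symmetric-binomial computation yields the lemma.
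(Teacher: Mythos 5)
Your proof is correct in substance (modulo the boundary verification you flag), but it takes a more elaborate route than the paper's. The paper also reduces not-PROP$c$ to the event $u_i(A_i) < u_i(M)/n - c$, which---as you observe---is equivalent to your $W > nc$ under binary utilities; it then lower-bounds this by the product event $\{u_i(A_i) < m/(2n) - 2c\} \cap \{u_i(M \setminus A_i) \ge m(n-1)/(2n)\}$. The second factor is $\ge 1/2$ by binomial symmetry (as in your second regime), and the first factor is $\ge 1/4$ by a single application of \Cref{lem:littlewood-offord-simple} with $k = \lfloor m/n\rfloor$ rather than $k = |A_i|$, combined with the fact that $\Bin(|A_i|, 1/2)$ is stochastically dominated by $\Bin(\lfloor m/n\rfloor, 1/2)$. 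That stochastic-dominance step is what makes your two regimes unnecessary: it handles small $|A_i|$ (your first regime, where you invoke reflection symmetry of $W$ plus a Hoeffding bound) and $|A_i|$ close to $m/n$ (your second regime) uniformly, and it never needs $|A_i| \ge 100$. Your route does work, and the reflection-symmetry observation that $W \mapsto 2\E[W] - W$ under $X_j \mapsto 1 - X_j$ is a nice self-contained fact, but---as you anticipate yourself---the inequality $\tau \le b/2 - 1$ in your second regime is razor-thin (it holds with equality at $n = 2$, $m = 200$, $|A_i| = 100$), and the coarse estimate that the second term is roughly $0.05\sqrt{mn}$ does not by itself beat the required $1$ when $\sqrt{mn}$ is near $20$; one must separately use that $c = 0$ whenever $m/n < 10000$. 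The paper's choice of threshold $m/(2n) - 2c$ sidesteps this delicate constant-tracking entirely, since it only needs $0.1\sqrt{\lfloor m/n\rfloor} \ge 2c$, which holds with room to spare.
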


\begin{proof}
Let $c' = 2c$.
We have
\begin{align}
\Pr_{\bu}&[A \text{ is not PROP}c  \text{ for agent } i] \nonumber \\
&\geq \Pr_{u_i}\left[u_i(A_i) < \frac{u_i(M)}{n} - c\right] \nonumber \\
&= \Pr_{u_i}\left[u_i(A_i) < \frac{u_i(M \setminus A_{i})}{n - 1} - \frac{n}{n-1} \cdot c\right] \nonumber \\
&\geq \Pr_{u_i}\left[u_i(A_i) < \frac{m}{2n} - c' \,\wedge\, u_i(M \setminus A_{i}) \geq \frac{m(n - 1)}{2n}\right] \nonumber \\
&= \Pr_{u_i}\left[u_i(A_i) < \frac{m}{2n} - c'\right] \nonumber \\
&\quad\cdot \Pr_{u_i}\left[u_i(M \setminus A_{i}) \geq \frac{m(n - 1)}{2n}\right] \nonumber \\
&\geq \frac{1}{2} \cdot \Pr_{u_i}\left[u_i(A_i) < \frac{m}{2n} - c'\right], 
\label{eq:prop-c-lb-ind}
\end{align}
where the last inequality follows from the fact that $|M \setminus A_i| \geq m(n - 1)/n$ and symmetry.

Since $|A_i|$ is an integer, $|A_i| \le \lfloor m/n\rfloor$.
Moreover, since $\lfloor m/n\rfloor \ge 100$ and the function $f(k) = k/2 - 0.1\sqrt{k}$ is increasing in $[1,\infty)$, applying \Cref{lem:littlewood-offord-simple} with $k = \lfloor m/n\rfloor$ gives $\Pr_{u_i}\left[u_i(A_i) < \frac{m}{2n} - c'\right] \geq 1/4$. 
Plugging this back into \eqref{eq:prop-c-lb-ind} yields the desired bound. 
\end{proof}

We are now ready to prove \Cref{thm:agent-level-prop-lb}.

\begin{proof}[Proof of \Cref{thm:agent-level-prop-lb}]
Let $\zeta = 0.01$ and let $\cM$ be any agent-level $\eps$-DP algorithm. 
Consider the input utility functions $\bu' = (u'_i)_{i \in N}$ where the utility functions are all-zero, and consider the distribution $\cM(\bu')$. 
For any allocation $A$, we have $\Pr_{i \in N}\left[|A_i| \leq m/n\right] \geq 1/n$ since at least one bundle $A_i$ must have size at most $m/n$.
This implies that $\Pr_{i \in N, A \sim \cM(\bu')}[|A_i| \leq m/n] \geq 1/n$. 
Thus, there exists $i^* \in N$ such that $\Pr_{A \sim \cM(\bu')}[|A_{i^*}| \leq m/n] \geq 1/n$.

Recalling the definition of $\bu$ from earlier and applying \Cref{lem:agent-level-prop-lb}, we have\footnote{Here, PROP$c$ is with respect to $\bu$.}
\begin{align*}
&\Pr_{\bu, A \sim \cM(\bu')}[A \text{ is not PROP}c \text{ for agent } i^*] \\
&\geq \Pr_{A \sim \cM(\bu')}\left[|A_{i^*}| \leq m/n\right] \\
&\quad\cdot\Pr_{\bu, A \sim \cM(\bu')}[A \text{ is not PROP}c \text{ for agent } i^* \mid
|A_{i^*}| \leq m/n] \\
&\geq (1/n) \cdot (1/8) = 1/(8n).
\end{align*}
Hence, there exists $u^*_{i^*}$ such that\footnote{The abbreviation ``w.r.t.'' stands for ``with respect to''.} 
$$
\Pr_{A \sim \cM(\bu')}[A \text{ is not PROP}c \text{ for agent } i^* \text{ w.r.t. } u^*_{i^*}] \geq 1/(8n).
$$ 

Now, let $\bu^*$ be the input utility such that $u^*_i$ is all-zero for each $i \ne i^*$ while $u^*_{i^*}$ is as above. 
Notice that $\bu^*$ is adjacent to $\bu'$ under agent-level adjacency. 
Thus, applying the $\eps$-DP guarantee of $\cM$, we get
\begin{align*}
&\Pr_{A \sim \cM(\bu^*)}[A \text{ is not PROP}c \text{ for agent } i^* \text{ w.r.t. } u^*_{i^*}] \\
&\geq e^{-\eps} \cdot \Pr_{A \sim \cM(\bu')}[A \text{ is not PROP}c \text{ for agent } i^* \text{ w.r.t. } u^*_{i^*}] \\
&\geq (e^{-\eps})/(8n).
\end{align*}
This completes the proof.
\end{proof}

\subsection{Proof of \Cref{thm:agent-level-ef-lb}}

As in the proof of \Cref{thm:agent-level-prop-lb}, we let $\zeta = 0.01$.
If $m < 100n$, then $c = 0$ and the theorem holds trivially even without the privacy requirement.
Hence, we may assume that $m \ge 100n$.
We consider random utility functions $\bu = (u_i)_{i \in N}$ where each $u_i(j)$ is an indepedent $\Ber(1/2)$ random variable. 
For brevity, we will not repeatedly state this in the calculations below.

For an allocation $A$ and an agent $i \in N$, we let $\rank(i; A)$ be the size of the set $\{j \in N \setminus \{i\} \mid |A_j| \geq |A_i|\}$. 

\begin{lemma} \label{lem:agent-level-ef-lb}
For $\zeta = 0.01$, let $c$ be as in \Cref{thm:agent-level-ef-lb} and $A$ be any allocation, and let $\ell\in N$ be such that $\rank(\ell; A) \geq (n - 1)/2$. 
Then, we have
\begin{align*}
\Pr_{\bu}[A \text{ is not EF}c \text{ for agent } \ell] \geq 0.01.
\end{align*}
\end{lemma}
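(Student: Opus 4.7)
The plan is to follow the template of \Cref{lem:agent-level-prop-lb} but extract an extra $\sqrt{\log n}$ factor by using the rank hypothesis to produce many independent ``envy targets'' for agent $\ell$ and then applying \Cref{lem:anti-concen}. Under binary additive utilities, $A$ fails EF$c$ for $\ell$ iff $u_\ell(A_{i'}) - u_\ell(A_\ell) > c$ for some $i'$. Let $I := \{i' \ne \ell : |A_{i'}| \ge |A_\ell|\}$, so $|I| \ge (n-1)/2$ by assumption. Because the bundles $A_\ell$ and $\{A_{i'}\}_{i' \in I}$ are pairwise disjoint, the random variables $u_\ell(A_\ell)$ and $\{u_\ell(A_{i'})\}_{i' \in I}$ are mutually independent, which is what lets the two halves of the argument be composed multiplicatively.

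As a preliminary structural step, I would show that $\bar a := \max_{i' \in I} |A_{i'}| \ge m/(2n)$. This is immediate when $|A_\ell| \ge m/(2n)$ (since every $i' \in I$ has $|A_{i'}| \ge |A_\ell|$), and otherwise follows by pigeonhole: the $n - |I| \le (n+1)/2$ agents outside $I$ all have bundles of size $\le |A_\ell| < m/(2n)$, so $\sum_{i' \in I} |A_{i'}| \ge m/2$, and dividing by $|I| \le n-1$ yields the bound. In particular $m \ge 100n$ gives $\bar a \ge 50$, so the maximizer $i^* \in I$ is amenable to anti-concentration.

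The core argument then combines two independent good events: (i) $u_\ell(A_\ell) < |A_\ell|/2 - 0.1\sqrt{|A_\ell|}$ with probability $\ge 1/4$ via \Cref{lem:littlewood-offord-simple} (or the trivial deterministic bound $u_\ell(A_\ell) \le |A_\ell|$ when $|A_\ell| < 100$), and (ii) $u_\ell(A_{i'}) > |A_{i'}|/2 + 0.1\sqrt{|A_{i'}|\log \gamma}$ for some $i' \in I$ via \Cref{lem:anti-concen} applied independently to each $i' \in I$. In the regime $m/n \gtrsim \log n$, taking $\gamma \asymp n$ makes each event have probability $\Omega(1/n)$, so the union over the $\ge (n-1)/2$ independent events occurs with constant probability; in the regime $m/n \lesssim \log n$, taking $\gamma$ to be a moderate constant suffices since the target $c \le \zeta m/n$ is smaller. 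Intersecting the two events and specializing to $i^*$ gives
\[
u_\ell(A_{i^*}) - u_\ell(A_\ell) \ge \tfrac{|A_{i^*}| - |A_\ell|}{2} + 0.1\sqrt{|A_{i^*}|\log \gamma} + 0.1\sqrt{|A_\ell|} > c
\]
once $\zeta = 0.01$ is small enough, with joint probability at least $1/4$ times a positive constant, exceeding $0.01$.

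The principal obstacle will be respecting the side condition $\gamma \le 2^{|A_{i'}|/4}$ of \Cref{lem:anti-concen} at the boundary $m/n \asymp \log n$, where $|A_{i'}|$ can be as small as $\Theta(\log n)$ even though we want $\gamma$ as large as $n$. This forces a careful case split along the two branches of the $\min$ in the definition of $c$ and a regime-dependent tuning of $\gamma$ so that both the per-event probability yields a constant union bound and the resulting deviation dominates $c$; the edge case $|A_\ell| < 100$ is absorbed by the trivial deterministic bound on $u_\ell(A_\ell)$.
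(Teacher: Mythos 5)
Your proposal correctly identifies the main conceptual ideas and closely mirrors the paper's strategy: independence across disjoint bundles, combining a low-value event for $A_\ell$ with a high-value event for some envy target in $I$, and taking a union over the $\geq (n-1)/2$ independent events via \Cref{lem:anti-concen}. However, the proposal explicitly stops short of resolving the issues you flag as ``the principal obstacle,'' and it is precisely there that the argument needs to be completed; moreover a few of your steps diverge from the paper's in ways that matter.

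The paper splits on $|A_\ell|$ relative to $m/(2n)$, not on which branch of the $\min$ in the definition of $c$ is active. When $|A_\ell| < m/(2n)$, no anti-concentration is needed at all: since the $n$ bundles partition $m$ items, the largest bundle has size $\geq m/n$ (a cleaner and stronger fact than the $\geq m/(2n)$ you obtain by pigeonhole) and is held by someone in $I$; a plain median bound for $\Bin(|A_1|,1/2)$ then gives envy exceeding $c$ with probability $\geq 1/4$, since $|A_\ell|/2 + c \leq (|A_1|-1)/2$. When $|A_\ell| \geq m/(2n)$, the paper does not apply \Cref{lem:anti-concen} to the full $A_{i'}$ as you propose---it restricts to a subset $S_{i'} \subseteq A_{i'}$ of size exactly $|A_\ell|$, so that the target threshold $|A_\ell|/2 + c$ is uniform across $i'$ (eliminating the bookkeeping for your $(|A_{i'}|-|A_\ell|)/2$ offsets) and the side conditions $k \geq 100$ and $\gamma \leq 2^{k/4}$ are checked once with $k = |A_\ell| \geq m/(2n)$. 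The parameter choice is $\gamma = \min\{n, 2^{m/(8n)}\}$; note this is not a ``moderate constant'' in the $m/n \lesssim \log n$ regime, and it is exactly the interplay between this $\min$ and the $\min$ inside $c$ that makes the deviation $0.1\sqrt{|A_\ell|\log\gamma}$ dominate $c$ in both regimes while keeping the per-event probability $\geq 0.1/n$. Finally, you do not need \Cref{lem:littlewood-offord-simple} for the $A_\ell$ side; the symmetric bound $\Pr[u_\ell(A_\ell) \leq |A_\ell|/2] \geq 1/2$ suffices and saves a factor of $2$ in the final probability.
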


\begin{proof}
Without loss of generality, we may rename the agents so that agent~$\ell$ in the lemma statement becomes agent~$i$ and $|A_1| \geq \cdots \geq |A_i| > |A_{i + 1}| \geq \cdots \geq |A_n|$. 
We now have $\rank(i; A) = i - 1 \geq \lceil (n-1)/2 \rceil$.

We have
\begin{align*}
&\Pr_{\bu}[A \text{ is not EF}c \text{ for agent } i] \\
&\geq \Pr_{u_i}[u_i(A_i) \leq |A_i| / 2] \\
& \qquad\cdot \Pr_{u_i}[A_i \text{ is not EF}c \text{ for agent } i \mid u_i(A_i) \leq |A_i| / 2] \\
&\geq \frac{1}{2} \cdot \Pr_{u_i}[A_i \text{ is not EF}c \text{ for agent } i \mid u_i(A_i) \leq |A_i| / 2] \\
&\geq \frac{1}{2} \cdot \Pr_{u_i}[\exists j < i, u_i(A_j) > |A_i| / 2 + c  \mid u_i(A_i) \leq |A_i| / 2] \\
&= \frac{1}{2} \cdot \left(1 - \prod_{j \in [i - 1]} \Pr_{u_i}[u_i(A_j) \leq |A_i| / 2 + c]\right)
\end{align*}
where the second inequality follows from the fact that $u_i(A_i)$ is a sum of $|A_i|$ independent $\Ber(1/2)$ random variables and the last equality follows from independence.

We now consider two cases.
\begin{itemize}
\item \underline{Case I}: $|A_i| < m / (2n)$. 
Since there are $m$ items and $n$ bundles, we have $|A_1| \geq m/n$.
Our assumption that $m \ge 100n$ and our choice of $c$ imply that $|A_i|/2 + c \leq (|A_1| - 1)/2$. 
Hence, we have $\Pr[u_i(A_1) \leq |A_i|/2 + c] \leq 1/2$. 
It follows that
\begin{align*}
\frac{1}{2} \cdot &\left(1 - \prod_{j \in [i - 1]} \Pr_{\bu}[u_i(A_j) \leq |A_i| / 2 + c]\right) \\
&\geq \frac{1}{2} \cdot \frac{1}{2} = \frac{1}{4}.
\end{align*} 
\item \underline{Case II}: $|A_i| \geq m / (2n)$. 
For any $j \in [i - 1]$, let $S_j \subseteq A_j$ be any subset of $A_j$ of size $|A_i|$. 
Applying \Cref{lem:anti-concen} with $\gamma = \min\{n, 2^{m/(8n)}\}$ and $k = |A_i|$ yields
\begin{align*}
\Pr_{\bu}&[u_i(A_j) > |A_i|/2 + c] \\
&\geq \Pr_{\bu}[u_i(S_j) > |A_i|/2 + c] \\
&\geq \Pr_{\bu}\left[u_i(S_j) > |A_i|/2 + 0.1\sqrt{|A_i| \log\gamma}\right] \\
&\overset{\text{(\Cref{lem:anti-concen})}}{\geq} 0.1/\gamma \geq 0.1/n,
\end{align*}
where the second inequality follows from our choice of parameters $c,\gamma$ and the fact that $|A_i| \ge m/(2n)$.
Thus, we have
\begin{align*}
\Pr_{\bu}&[A_i \text{ is not EF}c \text{ for agent } i] \\
&\geq \frac{1}{2} \cdot \left(1 - \left(1 - \frac{0.1}{n}\right)^{i-1}\right) \\
&\geq \frac{1}{2} \cdot \left(1 - \left(1 - \frac{0.1}{n}\right)^{\left\lceil \frac{n-1}{2} \right\rceil}\right) \\
&\geq \frac{1}{2} \cdot \left(1 - e^{-\frac{0.1}{n}\cdot\left\lceil \frac{n-1}{2} \right\rceil}\right) \\
&\geq \frac{1}{2} \cdot \left(1 - e^{-\frac{0.1}{3}}\right) \geq 0.01.
\end{align*}
\end{itemize}
Hence, in both cases, we have
\begin{align*}
\Pr_{\bu}[A_i \text{ is not EF}c \text{ for agent } i] \geq 0.01,
\end{align*}
which concludes our proof of the lemma.
\end{proof}

With \Cref{lem:agent-level-ef-lb} in hand, we can now prove \Cref{thm:agent-level-ef-lb} in a similar manner as \Cref{thm:agent-level-prop-lb}.

\begin{proof}[Proof of \Cref{thm:agent-level-ef-lb}]
Let $\zeta = 0.01$ and let $\cM$ be any agent-level $\eps$-DP algorithm. 
Consider the input utility functions $\bu' = (u'_1, \dots, u'_n)$ where the utility functions are all-zero, and consider the distribution $\cM(\bu')$.
Notice that, for any allocation $A$, it holds that $\Pr_{i \in N}\left[\rank(i; A) \geq \frac{n-1}{2}\right] \geq 1/2$. 
As a result, we have that $\Pr_{i \in N, A \sim \cM(\bu')}[\rank(i; A) \geq \frac{n-1}{2}] \geq 1/2$. 
Thus, there exists $i^* \in N$ with the property that $\Pr_{A \sim \cM(\bu')}[\rank(i^*; A) \geq \frac{n-1}{2}] \geq 1/2$.

Applying \Cref{lem:agent-level-ef-lb}, we get
\begin{align*}
&\Pr_{\bu, A \sim \cM(\bu')}[A \text{ is not EF}c \text{ for agent } i^*] \\
&\geq \Pr_{A \sim \cM(\bu')}\left[\rank(i^*; A) \geq \frac{n-1}{2}\right] \\
&\cdot\Pr_{\bu, A \sim \cM(\bu')}\bigg[A \text{ is not EF}c \text{ for } i^* \,\bigg\vert \rank(i^*; A) \geq \frac{n-1}{2}\bigg] \\
&\geq \frac{1}{2} \cdot \frac{1}{100} = \frac{1}{200}.
\end{align*}
Hence, there exists $u^*_{i^*}$ such that $$\Pr_{A \sim \cM(\bu')}[A \text{ is not EF}c \text{ for agent } i^* \text{ w.r.t. } u^*_{i^*}] \geq \frac{1}{200}.$$ 

Now, let $\bu^*$ be the input utility such that $u^*_i$ is all-zero for each $i \ne i^*$ while $u^*_{i^*}$ is as above. 
Notice that $\bu^*$ is adjacent to $\bu'$ under agent-level adjacency. 
Thus, applying the $\eps$-DP guarantee of $\cM$, we get
\begin{align*}
&\Pr_{A \sim \cM(\bu^*)}[A \text{ is not EF}c \text{ for agent } i^* \text{ w.r.t. } u^*_{i^*}] \\
&\geq e^{-\eps} \cdot \Pr_{A \sim \cM(\bu')}[A \text{ is not EF}c \text{ for agent } i^* \text{ w.r.t. } u^*_{i^*}] \\
&\geq \frac{e^{-\eps}}{200}.
\end{align*}
This completes the proof.
\end{proof}

\section{(Agent $\times$ Item)-Level DP}
\label{sec:agent-item-level}

In this section, we turn our attention to (agent $\times$ item)-level DP, which is a more relaxed notion than agent-level DP.
We explore both the possibilities (\Cref{sec:agent-item-algo}) and limits (\Cref{sec:agent-item-lower}) of private algorithms with respect to this notion.

\subsection{Algorithms}
\label{sec:agent-item-algo}

In contrast to agent-level DP, we will show that $O_{\eps, n}(\log m)$ upper bounds can be attained in the (agent $\times$ item)-level DP setting. 
Before we do so, let us first explain why straightforward approaches do not work.
To this end, assume that utilities are additive and $u_i(j) \in [0, 1]$ for all $i \in N, j \in M$. 
One may want to estimate $u_i(S)$ for each $S$ using the Laplace mechanism. 
While the Laplace mechanism guarantees that the estimate has an expected additive error of $O(1/\eps)$, this is not useful for obtaining approximate envy-freeness or proportionality guarantees: it is possible that (almost) every good yields utility much less than $1$. 
In this case, additive errors do not translate to any non-trivial EF$c$ or PROP$c$ guarantees. 
We will therefore develop different---and more robust---comparison methods, which ultimately allow us to overcome the aforementioned issue.

\subsubsection{Approximate Envy-Freeness}

Our main algorithmic result for approximate envy-freeness is stated below. 
Note that this result holds even for non-additive utility functions.

\begin{theorem} \label{thm:alg-ef-em}
For any $\eps > 0$ and $\beta \in (0, 1]$, there exists an (agent $\times$ item)-level $\eps$-DP algorithm that, for any input monotone utility functions, outputs a connected EF$c$ allocation with probability at least $1 - \beta$, where $c = O\left(1 + \frac{n \log(mn) + \log(1/\beta)}{\eps}\right)$.
\end{theorem}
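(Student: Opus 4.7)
The plan is to apply the exponential mechanism (\Cref{thm:EM}) over the candidate set $\cH := \cPc(m, n)$ of connected allocations. Since $|\cH| \leq \binom{m+n-1}{n-1} \leq (m+n)^{n}$, we have $\log|\cH| = O(n\log(mn))$, which matches the claimed bound once EM is instantiated with a scoring function of (agent $\times$ item)-level sensitivity $O(1)$. The remaining task is thus to design such a score whose value also controls the EF$c$ parameter of $A$, and which admits a benchmark allocation with a value of $\Omega(-1)$.

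As noted at the start of \Cref{sec:agent-item-algo}, the naive score $\scr(A) = -\max_{i, i'} e(A, i, i')$ built from the standard envy count $e(A, i, i') := \min\{|S| : S \subseteq A_{i'},\, u_i(A_i) \geq u_i(A_{i'} \setminus S)\}$ is not low-sensitivity: because utilities are only monotone (and, in particular, unbounded), changing $u_{i^*}(j^*)$ when $j^* \in A_{i^*}$ can shift $u_{i^*}(A_{i^*})$ by an arbitrary amount, which can in turn move $e(A, i^*, i')$ by as much as $|A_{i'}|$. To remedy this, I propose a \emph{two-sided} envy count that allows dropping items from \emph{both} $A_i$ and $A_{i'}$ inside the witness:
\[
e'(A, i, i') := \min\bigl\{\, |S| + |T| \;:\; S \subseteq A_{i'},\; T \subseteq A_i,\; u_i(A_i \setminus T) \geq u_i(A_{i'} \setminus S) \,\bigr\},
\]
and set $\scr(A) := -\max_{i, i'} e'(A, i, i')$. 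Monotonicity immediately yields \emph{soundness}: any witness $(S, T)$ for $e'(A, i, i') \leq c$ gives $u_i(A_i) \geq u_i(A_i \setminus T) \geq u_i(A_{i'} \setminus S)$ with $|S| \leq c$, so $\scr(A) \geq -c$ implies that $A$ is EF$c$. \emph{Completeness} follows from \Cref{thm:connected-ef2}: for a connected EF$2$ allocation $A^*$, choosing $T = \emptyset$ in the witness shows $\scr(A^*) \geq -2$.

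The technical crux -- and the step I expect to be the main obstacle -- is establishing that $\scr$ has (agent $\times$ item)-level sensitivity $O(1)$. Fixing adjacent inputs that differ only on $u_{i^*}(j^*)$, only terms of the form $e'(A, i^*, i')$ can change (since no other $u_i$ is touched). For each such pair I plan to take a $u$-optimal witness $(S, T)$ and argue, by case analysis on whether $j^* \in A_{i^*}$, whether $j^* \in A_{i'}$, and whether $j^*$ already lies in $T$ or $S$, that one can construct a $u'$-feasible witness of size at most $|S| + |T| + O(1)$ -- adding $j^*$ to $T$ on the ``$A_{i^*}$ side'' and to $S$ on the ``$A_{i'}$ side'' absorbs the perturbation, which is exactly the slack the two-sided definition was designed to provide. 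This is the delicate step, since for general monotone utilities a single item can contribute a huge amount to a bundle's value, and the argument must carefully track how much of that contribution can be ``refunded'' by expanding the witness. Once the sensitivity bound is in hand, \Cref{thm:EM} gives, with probability at least $1 - \beta$, an allocation $A \in \cPc(m,n)$ with $\scr(A) \geq -2 - O(\log(|\cH|/\beta)/\eps) = -O\!\left(1 + (n\log(mn) + \log(1/\beta))/\eps\right)$; by soundness, this $A$ is EF$c$ for the stated~$c$ and is connected by construction. Privacy of the overall algorithm follows from the EM guarantee together with post-processing (\Cref{obs:post-processing}).
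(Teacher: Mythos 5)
Your high-level strategy is the same as the paper's: run the exponential mechanism over the $\leq (mn)^n$ connected allocations, use a low-sensitivity score whose value controls the EF parameter, and use the connected-EF$2$ existence theorem as the benchmark. The gap is in the step you yourself flagged as the crux: the two-sided envy count $e'(A,i,i')$ does \emph{not} have constant sensitivity, and your proposed fix of ``adding $j^*$ to $T$ on the $A_{i^*}$ side'' goes the wrong way. If $j^* \in A_{i^*}$, then enlarging $T$ to include $j^*$ only \emph{decreases} $u_i(A_i \setminus T)$ by monotonicity, which makes the required inequality $u_i(A_i \setminus T) \geq u_i(A_{i'} \setminus S)$ harder to satisfy, not easier. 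A concrete counterexample, even with additive utilities: let $n=2$, $A_1 = \{1\}$, $A_2 = \{2,\dots,m\}$, $u_1(1)=m$, $u_1(j)=1$ for $j \geq 2$, and $u_2 \equiv 0$. Then $e'(A,1,2)=0$ since $u_1(A_1) = m \geq m-1 = u_1(A_2)$. Now take the (agent $\times$ item)-level adjacent input $\tu$ with $\tu_1(1) = 0$ and all else unchanged. For every $T \subseteq A_1 = \{1\}$ we have $\tu_1(A_1 \setminus T)=0$, so any feasible witness must drive $\tu_1(A_2 \setminus S)$ to $0$, forcing $|S| \geq m - 1$ and hence $e'(A,1,2) = m - 1$. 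Thus $\Delta(\scr) = \Omega(m)$, which destroys the EM guarantee.

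The paper avoids this by applying the item-removal budget as a $\min$ to \emph{both} bundles in the comparison: it defines $u_i^{-k}(S) := \min_{|T| \leq k} u_i(S \setminus T)$ and scores $A$ by the smallest $t$ such that $A$ is EF$(2t)$ with respect to $\bu^{-(g-t)}$. The crucial monotonicity directions are in \Cref{lem:stable}: the own-bundle modified value $u_i^{-k}(A_i)$ can only \emph{increase} under adjacency when the budget drops by one (because one may always commit a budget unit to removing $j^*$), while the other-bundle modified value can only \emph{decrease} when the budget rises by one. In other words, a single ``dominant'' item in $A_{i^*}$ never gets credited to the score in the first place, so there is nothing to lose under an adjacency step. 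Your $e'$ lacks exactly this property: it \emph{rewards} having a valuable item in one's own bundle, and that reward can evaporate with a single perturbation. To repair the argument you would effectively have to replace $u_i(A_i \setminus T)$ in $e'$ by a worst-case over small removals applied to $A_i$ itself, at which point you recover the paper's $\bu^{-k}$ construction and the score of \Cref{def:score-em}.
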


The high-level idea of our algorithm is to apply the exponential mechanism (\Cref{thm:EM}) to select an allocation among the $\leq (mn)^n$ \emph{connected} allocations. 
This gives us an ``error'' in the score 
of $O_\eps(\log((mn)^n)) = O_\eps(n \cdot \log (mn))$.
The question is how to set up the score so that (i) it has low sensitivity and (ii) such an error translates to an approximate envy-freeness guarantee. 
Our insight is to define the score based on the following modified utility function that ``removes'' a certain number of most valuable items.

\begin{definition}
\label{def:remove-k}
For $\bu = (u_1, \dots, u_n)$ and $k \in \N \cup \{0\}$, we define $\bu^{-k} = (u^{-k}_1, \dots, u^{-k}_n)$ by
\begin{align*}
u^{-k}_i(S) := \min_{T \subseteq M, |T| \leq k} u_i(S \setminus T) & &\forall i \in N, S \subseteq M.
\end{align*}
\end{definition}

It is clear that $\bu^{-k}$ inherits the monotonicity of $\bu$.
We next list some simple but useful properties of such utility functions. 
The first property, whose proof is trivial, relates \Cref{def:remove-k} to approximate envy-freeness.

\begin{observation} \label{obs:score-to-ef}
Let $k, d \in \N \cup \{0\}$. Any allocation that is EF$d$ with respect to $\bu^{-k}$ is EF$(d+k)$ with respect to $\bu$.
\end{observation}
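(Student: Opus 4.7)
The plan is to unfold the two definitions (EF$c$ and $u_i^{-k}$) and then chain two inequalities through the quantity $u_i^{-k}(A_i)$. Fix an allocation $A$ that is EF$d$ with respect to $\bu^{-k}$ and fix any pair of agents $i, i' \in N$. By the hypothesis, there exists $S' \subseteq A_{i'}$ with $|S'| \leq d$ satisfying
\[
u_i^{-k}(A_i) \;\geq\; u_i^{-k}(A_{i'} \setminus S').
\]

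Next, I would expand the right-hand side using Definition~\ref{def:remove-k}: let $T^*$ be a minimizer, so that $u_i^{-k}(A_{i'} \setminus S') = u_i\bigl((A_{i'} \setminus S') \setminus T^*\bigr)$ with $|T^*| \leq k$. Since items lying outside of $A_{i'} \setminus S'$ do not affect the value $u_i\bigl((A_{i'} \setminus S') \setminus T^*\bigr)$, I may replace $T^*$ by $T^* \cap (A_{i'} \setminus S')$ without changing anything, and hence assume $T^* \subseteq A_{i'} \setminus S'$. For the left-hand side, monotonicity of $u_i$ gives $u_i^{-k}(A_i) \leq u_i(A_i)$ (take any $T$ in the defining minimum).

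Combining the two sides, and setting $\widetilde{S} := S' \cup T^* \subseteq A_{i'}$, we obtain $|\widetilde{S}| \leq d + k$ and
\[
u_i(A_i) \;\geq\; u_i^{-k}(A_i) \;\geq\; u_i^{-k}(A_{i'} \setminus S') \;=\; u_i(A_{i'} \setminus \widetilde{S}),
\]
which is exactly the EF$(d+k)$ condition for agents $i, i'$ with respect to $\bu$. Since $i, i'$ were arbitrary, we are done. No step presents a real obstacle; the only small subtlety is the (notational) reduction to $T^* \subseteq A_{i'} \setminus S'$ so that $S' \cup T^*$ is guaranteed to lie inside $A_{i'}$ as required by the definition of EF$c$.
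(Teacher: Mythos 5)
Your proof is correct: the paper omits the argument entirely (calling it trivial), and your unfolding of the two definitions---bounding $u_i(A_i) \geq u_i^{-k}(A_i)$ via the empty set in the minimum, and absorbing the minimizing $T^*$ into the removed set $\widetilde{S} = S' \cup T^*$ of size at most $d+k$---is exactly the intended reasoning. The normalization $T^* \subseteq A_{i'} \setminus S'$ is handled cleanly, so nothing is missing.
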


The second property is that the $d, k$ values are robust with respect to (agent $\times$ item)-level adjacency. 

\begin{lemma} \label{lem:stable}
Let $k \in \N, d \in \N \cup \{0\}$, and $\bu, \tbu$ be any two (agent $\times$ item)-level adjacent inputs. If an allocation $A$ is EF$d$ with respect to $\bu^{-k}$, then it is EF$(d+2)$ with respect to $\tbu^{-(k - 1)}$.
\end{lemma}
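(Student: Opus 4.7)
My plan is to establish a pointwise ``sandwich'' comparing $\bu^{-k}$ with $\tbu^{-(k\pm 1)}$, apply it on both sides of the EF$d$ inequality, and then pay at most two extra items to re-align the $k$-index on the right-hand side.

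Let $(i^*, j^*)$ witness the adjacency of $\bu$ and $\tbu$. The first step is to prove, for every $i \in N$ and $X \subseteq M$,
\[
\tilde u_i^{-(k+1)}(X) \;\le\; u_i^{-k}(X) \;\le\; \tilde u_i^{-(k-1)}(X).
\]
For $i \neq i^*$ the two utilities coincide and the inequalities are just monotonicity of the $k$-index. For $i = i^*$, both inequalities follow from the same trick: given any witness $T$ of the appropriate size in the definition of one term, form a witness $T'$ for the neighbouring term by adjoining $\{j^*\}$ whenever $j^* \in X \setminus T$ (and taking $T' = T$ otherwise). By construction $j^* \notin X \setminus T'$, so $u_{i^*}$ and $\tilde u_{i^*}$ agree on $X \setminus T'$; monotonicity of $u_{i^*}$ (resp.\ $\tilde u_{i^*}$) then transfers the bound from $X \setminus T$ to $X \setminus T'$, and minimizing over $T$ (or $T'$) yields the claim.

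With the sandwich in hand, fix agents $i, i' \in N$. The EF$d$ hypothesis supplies $S \subseteq A_{i'}$ with $|S| \le d$ satisfying $u_i^{-k}(A_i) \ge u_i^{-k}(A_{i'} \setminus S)$. Applying the sandwich on both sides produces
\[
\tilde u_i^{-(k-1)}(A_i) \;\ge\; \tilde u_i^{-(k+1)}(A_{i'} \setminus S).
\]
Now let $T^* \subseteq A_{i'} \setminus S$, $|T^*| \le k+1$, attain the minimum defining the right-hand side, and choose any $R \subseteq T^*$ with $|R| = \min\{|T^*|, 2\}$. Setting $S'' := S \cup R$ and $T'' := T^* \setminus R$, we obtain $|S''| \le d+2$, $|T''| \le k-1$, and $A_{i'} \setminus S'' \setminus T'' = A_{i'} \setminus S \setminus T^*$. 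Plugging $T''$ into the definition of $\tilde u_i^{-(k-1)}(A_{i'} \setminus S'')$ then yields
\[
\tilde u_i^{-(k-1)}(A_{i'} \setminus S'') \;\le\; \tilde u_i(A_{i'} \setminus S \setminus T^*) \;=\; \tilde u_i^{-(k+1)}(A_{i'} \setminus S) \;\le\; \tilde u_i^{-(k-1)}(A_i),
\]
which is exactly the EF$(d+2)$ condition with respect to $\tbu^{-(k-1)}$ for the pair $(i, i')$.

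The main delicate point is the sandwich at $i = i^*$: the off-by-one in the $k$-index is precisely what absorbs a change in the utility on subsets containing $j^*$, which is why the conclusion must pay one extra item on each side (and hence $+2$ in total in the removal budget). The remainder of the argument is routine bookkeeping on the sizes of $S$ and $T^*$.
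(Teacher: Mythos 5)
Your proof is correct and follows essentially the same approach as the paper's: both arguments hinge on adjoining $j^*$ to the removal set so that $u_{i^*}$ and $\tu_{i^*}$ agree on the remaining items, paying one extra removed item on each side, and then transferring the two-item surplus from the removal budget into the envy budget. Your two-sided sandwich $\tu_i^{-(k+1)} \le u_i^{-k} \le \tu_i^{-(k-1)}$ is just a slightly more modular packaging of the two one-sided comparisons the paper carries out inline.
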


\begin{proof}
By definition of (agent $\times$ item)-level adjacency, there exist $i^* \in N, j^* \in M$ such that
\begin{itemize}
\item $u_i = \tu_i$ for all $i \in N \setminus \{i^*\}$, and
\item $u_i(S) = \tu_i(S)$ for all $S \subseteq M \setminus \{j^*\}$.
\end{itemize}

Consider any $i, i' \in N$.
Since $A$ is EF$d$ with respect to $\bu^{-k}$, we have
\begin{align} \label{eq:ef-condition}
u^{-k}_i(A_i) &\geq \min_{S \subseteq M, |S| \le d} u^{-k}_i(A_{i'} \setminus S) \nonumber \\
&= \min_{T \subseteq M, |T| \le d + k} u_i(A_{i'} \setminus T).
\end{align}
Furthermore, we have
\begin{align}
u^{-k}_i(A_i) &= \min_{T \subseteq M, |T| \leq k} u_i(A_i \setminus T) \nonumber \\
&\leq \min_{T \subseteq M, |T| \leq k - 1} u_i(A_i \setminus (T \cup \{j^*\})) \nonumber \\
&= \min_{T \subseteq M, |T| \leq k - 1} \tu_i(A_i \setminus (T \cup \{j^*\})) \nonumber \\
&\leq  \min_{T \subseteq M, |T| \leq k - 1} \tu_i(A_i \setminus T)  = \tu^{-(k - 1)}_i(A_i). \label{eq:remove-one-stable-selected}
\end{align}
Moreover,
\begin{align}
\min_{T \subseteq M, |T| \le d + k} &u_i(A_{i'} \setminus T) \nonumber \\
&\geq \min_{T \subseteq M, |T| \le d + k} u_i(A_{i'} \setminus (T \cup \{j^*\})) \nonumber \\
&= \min_{T \subseteq M, |T| \le d + k} \tu_i(A_{i'} \setminus (T \cup \{j^*\})) \nonumber \\
&\geq \min_{T \subseteq M, |T| \le d + k + 1} \tu_i(A_{i'} \setminus T) \nonumber \\
&= \min_{S \subseteq M, |S| \le d + 2} \tu^{-(k - 1)}_i(A_{i'} \setminus S). \label{eq:remove-one-stable-not-selected}
\end{align}
Thus, we can conclude that
\begin{align*}
\tu^{-(k - 1)}_i(A_i) 
&\overset{\eqref{eq:remove-one-stable-selected}}{\geq} u^{-k}_i(A_i) \\
&
\overset{\eqref{eq:ef-condition}}{\geq} \min_{T \subseteq M, |T| \le d + k} u_i(A_{i'} \setminus T) \\
&\overset{\eqref{eq:remove-one-stable-not-selected}}{\geq} \min_{S \subseteq M, |S| \le d + 2} \tu^{-(k - 1)}_i(A_{i'} \setminus S).
\end{align*}
It follows that $A$ is EF$(d+2)$ with respect to $\tbu^{-(k - 1)}$.
\end{proof}

We can now define our scoring function.

\begin{definition} \label{def:score-em} 
For an allocation~$A$, $\bu = (u_1, \dots, u_n)$, and $g \in \N$, define
$$\scr^g_A(\bu) := - \min\{t \in [g] \mid A \text{ is EF}(2t) \text{ w.r.t. } \bu^{-(g - t)}\}.$$
We let $\scr^g_A(\bu) = -g$ if the set above is empty.
\end{definition}

The following lemma shows that this scoring function has low sensitivity.

\begin{lemma} \label{lem:sensitivity-score}
For any allocation $A$ and $g \in \N$,
$\Delta(\scr^g_A) \leq 1$.
\end{lemma}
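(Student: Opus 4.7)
The plan is to show symmetrically that for any (agent $\times$ item)-level adjacent $\bu, \tbu$, we have $\scr^g_A(\tbu) \geq \scr^g_A(\bu) - 1$; combined with the same inequality with $\bu, \tbu$ swapped, this yields $|\scr^g_A(\bu) - \scr^g_A(\tbu)| \leq 1$. The key observation is that the scoring function was deliberately designed so that Lemma~\ref{lem:stable} exactly converts one changed item into one unit of score degradation: moving from $\bu$ to $\tbu$ loses one ``removed item'' ($k \to k-1$) and costs $2$ extra EF slack ($d \to d+2$). Since the score indexes $t$ with the pair $(d, k) = (2t, g - t)$, incrementing $t$ by one simultaneously decreases $k$ by one and increases $d$ by two, perfectly matching the conclusion of Lemma~\ref{lem:stable}.

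Concretely, I would fix adjacent inputs $\bu \sim \tbu$ and set $t := -\scr^g_A(\bu)$, so $t \in [g] \cup \{g\}$. If $t = g$, then $\scr^g_A(\tbu) \geq -g = \scr^g_A(\bu)$ by the definition of $\scr^g_A$, which already implies $\scr^g_A(\tbu) \geq \scr^g_A(\bu) - 1$. Otherwise $t \leq g-1$, and by definition $A$ is EF$(2t)$ with respect to $\bu^{-(g-t)}$ where $g-t \geq 1$. Applying Lemma~\ref{lem:stable} with $k = g - t$ and $d = 2t$, we conclude that $A$ is EF$(2t+2)$ with respect to $\tbu^{-(g-t-1)} = \tbu^{-(g-(t+1))}$. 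Hence $t+1$ belongs to the set whose minimum defines $-\scr^g_A(\tbu)$, giving $\scr^g_A(\tbu) \geq -(t+1) = \scr^g_A(\bu) - 1$, as desired.

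I expect no real obstacle beyond careful bookkeeping: the content is already in Lemma~\ref{lem:stable}, and the only subtlety is handling the boundary case $t = g$ (where Lemma~\ref{lem:stable} does not directly apply because it requires $k \geq 1$) and making sure the argument is symmetric in $\bu$ and $\tbu$, which follows because (agent $\times$ item)-level adjacency is a symmetric relation.
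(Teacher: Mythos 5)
Your proposal is correct and matches the paper's proof essentially verbatim: both reduce the claim to the one-sided bound $\scr^g_A(\tbu) \ge \scr^g_A(\bu) - 1$ via \Cref{lem:stable} applied with $(d,k) = (2t, g-t)$, handling the boundary case $t = g$ separately. The only cosmetic difference is that the paper assumes $\scr^g_A(\bu) \ge \scr^g_A(\tbu)$ without loss of generality, whereas you invoke the symmetry of the adjacency relation; these are equivalent.
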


\begin{proof}
Let $\bu, \tbu$ be any pair of (agent $\times$ item)-level adjacent inputs. 
Assume without loss of generality that $\scr^g_A(\bu) \geq \scr^g_A(\tbu)$. 
Let $t^* := -\scr^g_A(\bu)$. 

If $t^* = g$, then $\scr^g_A(\tbu) \le \scr^g_A(\bu) = -g$, so $\scr^g_A(\tbu) = -g$.
Otherwise, $t^* \leq g - 1$, and $A$ is EF$(2t^*)$ with respect to $\bu^{-(g - t^*)}$.
\Cref{lem:stable} ensures that $A$ is EF$(2(t^* + 1))$ with respect to $\tbu^{-(g - t^* - 1)}$. 
Thus, $\scr^g_A(\tbu) \geq -(t^* + 1)$. 
Since $-t^* = \scr^g_A(\bu) \geq \scr^g_A(\tbu)$, we have $|\scr^g_A(\bu) - \scr^g_A(\tbu)| \leq 1$ in both cases, completing the proof.
\end{proof}

With all the ingredients ready, we can prove \Cref{thm:alg-ef-em} by applying the exponential mechanism with appropriate parameters (see \Cref{alg:scoring}).

\begin{algorithm}[tb]
\caption{(Agent$\times$item)-level $\varepsilon$-DP algorithm for EF$c$}
\label{alg:scoring}
\textbf{Parameter}: $\eps > 0, \beta\in(0,1]$
\begin{algorithmic}[1] 
\STATE $g \leftarrow 4\left\lceil1 + \frac{\log((mn)^n / \beta)}{\eps}\right\rceil$
\RETURN the allocation output by the $\eps$-DP exponential mechanism using the scoring function $\scr^g_A$ (\Cref{def:score-em}) with the candidate set $\cPc(m, n)$ (\Cref{sec:fair-division}).
\end{algorithmic}
\end{algorithm}

\begin{proof}[Proof of \Cref{thm:alg-ef-em}]
Let $g = 4\left\lceil1 + \frac{\log((mn)^n / \beta)}{\eps}\right\rceil$. 
We run the exponential mechanism using the scoring function $\scr^g_A$ with the candidate set $\cPc(m, n)$. 
By \Cref{thm:EM}, this is an $\eps$-DP algorithm that, for each $\bu$, with probability at least $1 - \beta$, outputs an allocation $A^*$ such that
\begin{align*}
\scr^g_{A^*}(\bu) \geq \max_{A \in \cPc(m,n)} \scr^g_{A}(\bu) - \frac{2\log\left(\frac{|\cPc(m, n)|}{\beta}\right)}{\eps}.
\end{align*}

Fix any $\bu$, and define $A^*$ as above.
By \Cref{thm:connected-ef2}, there exists a connected allocation $A^{\text{EF2}}$ that is EF2 with respect to $\bu^{-(g - 1)}$. 
This means that $\scr^g_{A^{\text{EF2}}} = -1$.
Furthermore, we have\footnote{Indeed, from the set $M = \{1,2,\dots,m\}$, we can allocate one ``block'' of items at a time starting from items with lower indices.
There are at most $m$ possibilities for the size of the next block, this block can be allocated to one of the (at most $n$) remaining agents, and we allocate $n$ blocks in total, hence the bound $(mn)^n$.} $|\cPc(m, n)| \leq (mn)^n$. 
Plugging these into the inequality above, we get
\begin{align*}
\scr^g_{A^*}(\bu) \geq -1 - \frac{2\log((mn)^n / \beta)}{\eps} \geq -\frac{g}{2},
\end{align*}
where the latter inequality follows from our choice of $g$. 
Hence, $A^*$ is EF$(2c)$ with respect to $\bu^{-(g-c)}$ for some $c \le g/2$. 
Invoking \Cref{obs:score-to-ef}, we find that $A^*$ is EF$(g+c)$, and therefore EF$(3g/2)$, with respect to $\bu$.
This concludes our proof. 
\end{proof}

\subsubsection{Approximate Proportionality}

Next, we present an improved result for approximate proportionality, where the dependence on $n$ is reduced to $O(\log n)$.

\begin{theorem} \label{thm:prop-moving-knife}
For any $\eps > 0$ and $\beta \in (0, 1]$, there exists an (agent $\times$ item)-level $\eps$-DP algorithm that, for any input additive utility functions, outputs a connected PROP$c$ allocation with probability at least $1 - \beta$, where $c = O\left(\log n + \frac{\log(mn/\beta)}{\eps}\right)$.
\end{theorem}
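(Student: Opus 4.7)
The plan is to adapt the classical Dubins--Spanier moving-knife protocol for connected proportional allocations to the differentially private setting. We would process items left-to-right in $n-1$ rounds; at each round we use the sparse vector technique (\Cref{thm:svt}) to identify the earliest cut position at which some remaining agent is prepared to ``shout'' that the current prefix is approximately worth her proportional share, and we award that prefix to her.

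First, I would design a sensitivity-$1$ comparison function $f^{(r)}_{i, j}(\bu)$ in the spirit of \Cref{def:score-em} and \Cref{lem:sensitivity-score}. The naive choice $u_i(\{L{+}1,\dots,j\}) - u_i(M)/n$ has unbounded sensitivity under $(\text{agent} \times \text{item})$-level adjacency, since changing $u_i(j^*)$ can shift both sides by an arbitrary real amount. Instead, I would use the ``top-$k$-removed'' utility $u_i^{-k}$ of \Cref{def:remove-k} to build a count-based score, roughly the smallest $t$ for which the prefix is PROP-$2t$ for agent~$i$ after her top $g-t$ items are removed; an argument analogous to \Cref{lem:stable} would show that this score has sensitivity $1$.

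Next, I would run SVT over the $\le mn$ (agent, cut-position) pairs in each round, yielding per-round error $O(\log(mn/\beta')/\eps')$. To avoid the naive factor-$n$ blowup in $\eps$ from basic composition, I would invoke \Cref{thm:parallel-comp}: the items allocated in different rounds form disjoint intervals, and $(\text{agent} \times \text{item})$-adjacency perturbs only a single item $j^*$ that lies in exactly one such interval, so the rounds compose essentially in parallel. After accounting for a combinatorial $O(\log n)$ slack from how round boundaries interact with the count-based score (analogous to the $+2$ slack per adjacency step in \Cref{lem:stable}), this yields the claimed $c = O(\log n + \log(mn/\beta)/\eps)$. The translation from SVT's guarantee to a PROP$c$ certificate then goes as follows: for each non-final agent who shouts at some round, the ``above threshold'' clause of \Cref{thm:svt} directly yields PROP$c$ via the definition of $u_i^{-k}$; for the final agent, the ``below threshold'' clauses accumulated across all earlier rounds certify that none of the previously awarded prefixes was valued much above her proportional share, so by total-utility accounting her residual bundle is also PROP$c$.

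The main obstacle is the privacy accounting just described: the partition of items into rounds is itself data-dependent (earlier SVT outputs determine later ranges), so \Cref{thm:parallel-comp}, which assumes a fixed partition function $\Gamma$, does not apply out of the box. Resolving this---for instance by recasting the entire moving knife as one composite mechanism operating on a fixed per-item partition, or by a careful round-by-round argument that attributes each unit of privacy loss to the specific round containing the adjacency perturbation---is likely the crux of the proof.
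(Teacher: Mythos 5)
Your overall plan (moving knife, truncated utilities $u_i^{-k}$ for sensitivity $1$, SVT per round) matches the paper's starting point, and you correctly flag the privacy accounting across rounds as the crux---but the fix you sketch does not work, and the two ideas that actually close the gap are missing. The obstruction is not merely that the partition of items into rounds is data-dependent; it is that the rounds are not disjoint in the \emph{data} at all. In the sequential protocol, agent $i^*$'s SVT queries in round $r$ depend on $u_{i^*}$ evaluated on prefixes and suffixes of the entire unallocated range (or on $u_{i^*}(M)/n$), so a single perturbed value $u_{i^*}(j^*)$ changes the query functions in \emph{every} round in which $i^*$ is still active---up to $n-1$ of them---not only the round whose awarded interval contains $j^*$. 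Hence neither \Cref{thm:parallel-comp} nor any ``attribute the privacy loss to one round'' argument applies; basic composition forces a per-round budget of $\eps/n$, and you land back at $c = O(n\log(mn)/\eps)$, which does not improve on \Cref{thm:alg-ef-em}.

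The paper escapes this with two ideas absent from your proposal. First, it replaces the one-agent-at-a-time knife with a balanced divide-and-conquer: the knife stops where roughly half of the active agents are (approximately) indifferent between the two sides, and the algorithm recurses on both halves. Each agent then participates in only $O(\log n)$ levels; within a level the SVT calls for distinct agents genuinely compose in parallel (each uses only that agent's utility), as do the two recursive calls (each agent appears in exactly one), so only the $O(\log n)$ levels must be composed sequentially. Second, even $O(\log n)$ levels with budget $\eps/\log n$ each would give $c = O((\log n \cdot \log(mn))/\eps)$; to remove the extra $\log n$ factor, the paper assigns geometrically increasing budgets $\eps_b \propto \eps/1.5^{b}$ to deeper levels, exploiting the fact that an error of $g$ items at a level with $n^i_q$ active agents contributes only about $g/n^i_q$ to agent $i$'s final deficit (inequality \eqref{eq:one-step-moving-knife}). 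Summing the resulting geometric series yields $c = O(\log n + \log(mn/\beta)/\eps)$. Without the recursive structure and the level-dependent budgets, the sequential variant cannot reach the claimed bound.
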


Our algorithm is based on the well-known ``moving-knife'' procedure from cake cutting \citep{DubinsSp61}.
A natural way to implement this idea in our setting is to place the items on a line, put a knife at the left end, and move it rightwards until some agent values the subset of items to the left of the knife at least $1/n$ of her value for the whole set of items.
We give this subset to this agent, and proceed similarly with the remaining agents and items. 
To make this procedure DP, we can replace the check of whether each agent receives sufficiently high value with the SVT algorithm (\Cref{thm:svt}), where the usual utility is modified similarly to \Cref{def:score-em} to achieve low sensitivity. 
While this approach is feasible, it does not establish the bound we want: since the last agent has to participate in $n$ ``rounds'' of this protocol, the basic composition theorem (\Cref{thm:basic-comp}) implies that we can only allot a privacy budget of $\eps / n$ in each round. 
This results in a guarantee of the form $c = O(\log m / (\eps/n)) = O((n \log m) / \eps)$, which does not distinctly improve upon the guarantee in \Cref{thm:alg-ef-em}.

To overcome this issue, notice that instead of targeting a single agent, we can continue moving our knife until at least $n/2$ agents value the subset of items to the left of the knife at least half of the entire set.\footnote{\citet{EvenPa84} used a similar idea in cake cutting.}
This allows us to recurse on both sides, thereby reducing the number of rounds to $\log n$. 
Hence, we may allot a privacy budget of $\eps / \log n$ in each round.
Unfortunately, this only results in a bound of the form $c = O(\log m / (\eps/ \log n)) = O((\log n \log m) / \eps)$, which is still worse than what we claim in \Cref{thm:prop-moving-knife}.

Our last observation is that we can afford to make more mistakes in earlier rounds: for example, in the first round, we would be fine with making an ``error'' of roughly $O(n)$ in the knife position because the subsets on both sides will be subdivided to $\Omega(n)$ parts later. 
As a result, our strategy is to allot less privacy budget in earlier rounds and more in later rounds.
By letting the privacy budgets form a geometric sequence, we can achieve our claimed $O(\log (mn) / \eps)$ bound.

\begin{proof}[Proof of \Cref{thm:prop-moving-knife}]
The proof follows the overview outlined above.
The statement holds trivially if $\beta = 1$, so assume that $\beta < 1$.
For convenience, given any positive integers $\ell \leq r$, we write $[\ell, r]$ as a shorthand for $\{\ell,\ell+1, \dots, r\}$.

Let $\eps_1, \dots, \eps_{\lceil \log_2 n \rceil}, g_1, \dots, g_{\lceil \log_2 n \rceil}$ be defined as $\eps_b = \frac{\eps}{2 \cdot (1.5^b)}$ and $g_b = 8\lceil \upsilon \cdot \log(mn/\beta) / \eps_b \rceil$, where $\upsilon$ is defined as in \Cref{thm:svt}. 
Our algorithm, \textsc{DPMovingKnife}, is presented as \Cref{alg:moving-knife}. 
The final algorithm is an instantiation of \Cref{alg:moving-knife} with $\cI = N$ and $(\ell, r) = (1, m)$.

\begin{algorithm}[tb]
\caption{\textsc{DPMovingKnife}}
\label{alg:moving-knife}
\textbf{Input}: A set $\cI \subseteq N$ of agents, a set $[\ell, r] \subseteq M$ of items\\
\textbf{Parameter}: $\eps_1, \dots, \eps_{\lceil \log n\rceil} > 0, g_1, \dots, g_{\lceil \log n \rceil} \in \N$\\
\textbf{Output}: A partial allocation $(A_i)_{i \in \cI}$ of $[\ell, r]$

\begin{algorithmic}[1] 
\IF{$|\cI| = 1$}
\RETURN $(A_i = [\ell, r])$ for $i\in\cI$
\ENDIF
\STATE $b \leftarrow \lceil \log_2 |\cI| \rceil$
\STATE $n^R \leftarrow \lfloor |\cI| / 2 \rfloor$
\STATE $n^L \leftarrow |\cI| - n^R$
\FOR{each agent $i \in \cI$}
\FOR{each $h \in [\ell, r]$}
\STATE 
\begin{align*}
T^{i, \ell, r}_h(\bu) \gets \bigg\{&t \in [g_b] \,\bigg\vert\, \frac{1}{n^L} \cdot u^{-(g_b + t)}_i([\ell, h]) \\
&\geq \frac{1}{n^R} \cdot u^{-(g_b - t)}_i([h + 1, r])\bigg\}
\end{align*}
\STATE
\begin{align*}
f^{i, \ell, r}_h(\bu) \gets
\begin{cases}
\max(T^{i, \ell, r}_h(\bu)) & \text{ if } T^{i, \ell, r}_h(\bu) \ne \emptyset; \\
0 &\text{ otherwise}
\end{cases}
\end{align*}
\ENDFOR
\STATE $h_i \gets$ output from the $\eps_b$-DP SVT algorithm (\Cref{thm:svt}) with $\tau = g_b/2$ on $f^{i, \ell, r}_\ell, \dots, f^{i, \ell, r}_r$
\ENDFOR
\STATE $h_{z_1} \leq h_{z_2} \leq \cdots \leq h_{z_{|\cI|}} \gets$ sorted list of $h_i$'s
\STATE $A^L \gets \textsc{DPMovingKnife}(\{z_1, \dots, z_{n^L}\}, \{\ell, \dots, h_{z_{n^L}}\})$
\STATE $A^R \gets \textsc{DPMovingKnife}(\{z_{n^L + 1}, \dots, z_{|\cI|}\}, \{h_{z_{n^L}} + 1, \dots, r\})$
\RETURN the allocation resulting from combining $A^L$ and~$A^R$
\end{algorithmic}
\end{algorithm}

Before we prove the utility and privacy guarantees of our algorithm, let us prove the following auxiliary lemma which states that the sensitivity of $f^{i, \ell, r}_h$ is small. The proof is similar to that of \Cref{lem:sensitivity-score}.

\renewcommand\qedsymbol{$\vartriangleleft$} 

\begin{lemma} \label{lem:sensitivity-f}
For all $\ell, r, h \in M$ and $i \in N$, it holds that $\Delta(f^{i, \ell, r}_h) \leq 1$.
\end{lemma}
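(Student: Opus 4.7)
The plan is to reduce the claim to the same type of reasoning used in the proof of Lemma~\ref{lem:stable}. I would fix a pair of (agent $\times$ item)-level adjacent inputs $\bu, \tbu$ differing only on the utility of some agent $i^*$ for some item $j^*$, and aim to prove the one-sided bound $f^{i,\ell,r}_h(\tbu) \geq f^{i,\ell,r}_h(\bu) - 1$; the matching upper bound then follows by symmetry, swapping the roles of $\bu$ and $\tbu$.

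The first step will be to establish the pointwise inequality
\[
u^{-k}_i(S) \leq \tu^{-(k-1)}_i(S) \quad \text{for every } k \geq 1,\, i \in N,\, S \subseteq M,
\]
together with its symmetric counterpart $\tu^{-k}_i(S) \leq u^{-(k-1)}_i(S)$. This is essentially a restatement of \eqref{eq:remove-one-stable-selected}: for any $T'$ with $|T'| \leq k-1$, the set $T := T' \cup \{j^*\}$ has size at most $k$ and contains $j^*$, so that $u_i(S \setminus T) = \tu_i(S \setminus T) \leq \tu_i(S \setminus T')$ by monotonicity; taking the minimum over such $T'$ yields the claim.

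Armed with this, I would set $t^* := f^{i,\ell,r}_h(\bu)$. The case $t^* \leq 1$ is immediate since $f \geq 0$ by construction. For $t^* \geq 2$ one has $t^* \in T^{i,\ell,r}_h(\bu)$, and I would apply the key inequality to the ``left'' term with $k = g_b + t^*$ and its symmetric form to the ``right'' term with $k = g_b - t^* + 1$, chaining
\begin{align*}
\tfrac{1}{n^L}\, \tu^{-(g_b + t^* - 1)}_i([\ell,h])
&\geq \tfrac{1}{n^L}\, u^{-(g_b + t^*)}_i([\ell,h]) \\
&\geq \tfrac{1}{n^R}\, u^{-(g_b - t^*)}_i([h+1,r]) \\
&\geq \tfrac{1}{n^R}\, \tu^{-(g_b - t^* + 1)}_i([h+1,r]).
\end{align*}
The middle inequality is the hypothesis $t^* \in T^{i,\ell,r}_h(\bu)$, and the resulting chain is precisely the condition witnessing $t^* - 1 \in T^{i,\ell,r}_h(\tbu)$, giving $f^{i,\ell,r}_h(\tbu) \geq t^* - 1$, as required.

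The main (mild) obstacle is bookkeeping the shifts correctly: because $t$ enters with opposite signs on the two sides of the defining inequality for $T^{i,\ell,r}_h$, the ``stability under removing one extra item'' move must be applied in opposite directions on the two sides, so that both shifts correspond to the \emph{same} replacement $t^* \leadsto t^* - 1$. This is what yields sensitivity $1$ rather than $2$, and beyond this the argument mirrors that of Lemma~\ref{lem:sensitivity-score}.
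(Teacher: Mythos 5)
Your proof is correct and matches the paper's argument in all essentials: you reduce to the one-sided bound $f^{i,\ell,r}_h(\tbu) \geq f^{i,\ell,r}_h(\bu) - 1$ (the paper instead assumes WLOG $f^{i,\ell,r}_h(\bu) \geq f^{i,\ell,r}_h(\tbu)$, an equivalent symmetry move), and the chain of three inequalities you write for $t^* \geq 2$ is exactly the paper's chain, with the stability fact $u_i^{-k}(S) \le \tu_i^{-(k-1)}(S)$ serving as the first and last links. The only cosmetic difference is your split at $t^* \le 1$ rather than $t^* = 0$, which slightly more cleanly handles the fact that $t^* - 1$ need not lie in $[g_b]$.
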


\begin{proof}
Let $\bu, \tbu$ be any pair of (agent $\times$ item)-level adjacent inputs. 
Assume without loss of generality that $f^{i, \ell, r}_h(\bu) \geq f^{i, \ell, r}_h(\tbu)$. 
Let $t^* := f^{i, \ell, r}_h(\bu)$. 

If $t^* = 0$, then $f^{i, \ell, r}_h(\tbu)$ must also be equal to 0.

Otherwise, $t^* > 0$, and we have
\begin{align} \label{eq:f-inequality}
\frac{1}{n^L} \cdot u^{-(g_b + t^*)}_i([\ell, h])
\geq \frac{1}{n^R} \cdot u^{-(g_b - t^*)}_i([h + 1, r]).
\end{align}
From this, we can derive
\begin{align*}
\frac{1}{n^L} \cdot \tu^{-(g_b + t^* - 1)}_i([\ell, h]) 
&\geq \frac{1}{n^L} \cdot u^{-(g_b + t^*)}_i([\ell, h]) \\
&\overset{\eqref{eq:f-inequality}}{\geq} \frac{1}{n^R} \cdot u^{-(g_b - t^*)}_i([h + 1, r]) \\
&\geq \frac{1}{n^R} \cdot \tu^{-(g_b - t^* + 1)}_i([h + 1, r]),
\end{align*}
where the first and last inequalities follow from the fact that $\bu, \tbu$ are (agent $\times$ item)-level adjacent inputs. Thus, it must be that $f^{i, \ell, r}_h(\tbu) \geq t^* - 1$. 

It follows that $|f^{i, \ell, r}_h(u) - f^{i, \ell, r}_h(\tbu)| \leq 1$ in both cases, completing the proof.
\end{proof}

We are now ready to establish the privacy and utility guarantees of the algorithm, starting with the former.

\paragraph{Privacy Analysis.}
We will prove by induction that $\textsc{DPMovingKnife}(N, M)$ is $\eps$-DP.
Specifically, we claim that $\textsc{DPMovingKnife}(\cI, [\ell, r])$ is $\left(\sum_{b=1}^{\lceil \log_2 |\cI| \rceil} \eps_b\right)$-DP for any $\cI \in N$ and $[\ell, r] \subseteq M$. 
We prove this claim by (strong) induction on the size of $\cI$. 
The base case $|\cI| = 1$ is trivial.

For $|\cI| > 1$, let us divide the algorithm into two stages: (i) computation of $h_{z_1}, \dots, h_{z_{|\cI|}}$ and (ii) computation of $A^L$ and $A^R$ given $h_{z_1}, \dots, h_{z_{|\cI|}}$. 
The privacy for each stage can be analyzed as follows:
\begin{itemize}
\item \textbf{Stage (i).} 
From \Cref{lem:sensitivity-f} and \Cref{thm:svt}, each application of SVT is $(\eps_{\lceil \log_2 |\cI| \rceil})$-DP. 
Moreover, since SVT is applied on each $u_i$ separately, parallel composition (\Cref{thm:parallel-comp}) implies\footnote{More specifically, we may apply \Cref{thm:parallel-comp} with $\Gamma: \cX \to (\cX)^{\cI}$ where $\Gamma(\bu)_i = u_i$, and $\cM$ being the SVT algorithm.} that the entire computation of $(h_i)_{i \in \cI}$ is also $(\eps_{\lceil \log_2 |\cI| \rceil})$-DP. 
Finally, $h_{z_1}, \dots, h_{z_{|\cI|}}$ is simply a post-processing of $(h_i)_{i \in \cI}$, so \Cref{obs:post-processing} ensures that the entire Stage (i) is $(\eps_{\lceil \log_2 |\cI| \rceil})$-DP.
\item \textbf{Stage (ii).} 
The inductive hypothesis asserts that each recursive call to \textsc{DPMovingKnife} is $\left(\sum_{b=1}^{\lceil \log_2|\cI| \rceil - 1} \eps_b\right)$-DP. 
Furthermore, since each $u_i$ is used in only one recursive call, we can apply\footnote{More specifically, we let $\Gamma: \cX \to (\cX)^2$ where $\Gamma(\bu)_1 = (u_{z_1}, \dots, u_{z_{n^L}})$ and $\Gamma(\bu)_2 = (u_{z_{n^L + 1}}, \dots, u_{z_{|\cI|}})$, and $\cM$ being the \textsc{DPMovingKnife} algorithm.} parallel composition (\Cref{thm:parallel-comp}), which ensures that the entire Stage (ii) is also $\left(\sum_{b=1}^{\lceil \log_2|\cI| \rceil - 1} \eps_b\right)$-DP.
\end{itemize}
Therefore, applying basic composition (\Cref{thm:basic-comp}) across the two stages yields that the entire algorithm is $\left(\sum_{b=1}^{\lceil \log_2 |\cI| \rceil} \eps_b\right)$-DP.
Since $\sum_{b=1}^{\lceil \log_2 n \rceil} \eps_b \leq \sum_{b=1}^{\infty} \eps_b = \eps$, it follows that the entire algorithm when called with $\cI = N$ is $\eps$-DP, as desired.

\paragraph{Utility Analysis.}
We next analyze the utility of the algorithm. 
To this end, for each agent $i$, let $\ell^i_1, r^i_1, \dots, \ell^i_{w_i}, r^i_{w_i}$ be the values of $\ell, r$ with which the algorithm is invoked for a set $\cI$ containing $i$. 
Note that $\ell^i_1 = 1$, $r^i_1 = m$, and $w_i \leq \lceil \log_2 n \rceil + 1$. 
Similarly, let $h^i_1, \dots, h^i_{w_i - 1}$ denote the values of $h_i$ output by SVT for agent $i$ in each of the calls except the last one, and $n^i_1, \dots, n^i_{w_i}$ denote the sizes of $\cI$ in each invocation. (Note that we must have $n^i_{w_i} = 1$.)
Furthermore, for each $q \in [w_i - 1]$, we let $R^i_q := [\ell^i_q, r^i_q] \setminus [\ell^i_{q + 1}, r^i_{q + 1}]$ denote the set of items removed in the $q$-th iteration, and let $u^{\topk}_i(R^i_q) := \max_{S \subseteq R^i_q, |S| \leq k} u_i(S)$ for any non-negative integer~$k$.

Consider any $i \in N$ and $q \in [w_i - 1]$. 
Let $b_q = \lceil \log_2 n_q^i \rceil$ Notice that $f^{i, \ell, r}_r(\bu) = g_{b_q} \geq g_{b_q}/2$. 
Thus, by the guarantee of SVT (\Cref{thm:svt}), we have that with probability at least $1 - \beta/n^2$, the following holds for each $i \in N$ and $q \in [w_i - 1]$:
\begin{align} \label{eq:svt-above}
f^{i,\ell,r}_{h^i_q}(\bu) 
&\geq \frac{g_{b_q}}{2} - \upsilon \cdot \frac{\log(mn^2/\beta)}{\eps_{b_q}}  \nonumber \\
&> \frac{g_{b_q}}{2} - 2\upsilon \cdot \frac{\log(mn/\beta)}{\eps_{b_q}} \geq \frac{g_{b_q}}{4}
\end{align} 
and
\begin{align} \label{eq:svt-below}
f^{i,\ell,r}_{h'}(\bu) 
&\leq \frac{g_{b_q}}{2} + \upsilon \cdot \frac{\log(mn^2/\beta)}{\eps_{b_q}} \nonumber \\
&< \frac{g_{b_q}}{2} + 2\upsilon \cdot \frac{\log(mn/\beta)}{\eps_{b_q}}  \leq \frac{3g_{b_q}}{4}
\end{align}
for all $h' < h^i_q$.
Therefore, the union bound implies that, with probability at least $1 - \beta$, both \eqref{eq:svt-above} and \eqref{eq:svt-below} hold for all $i\in N$ and $q\in [w_i - 1]$ simultaneously.
We will show that, when this occurs, the output allocation is PROP$c$ for the value of $c$ in the theorem statement.

First, let us fix a pair $i \in N$ and $q \in [w_i - 1]$, and consider the $q$-th time \textsc{DPMovingKnife} is called with $i \in \cI$. 
We will show that
\begin{align} \label{eq:one-step-moving-knife}
&u_i([\ell^i_{q + 1}, r^i_{q + 1}])\nonumber\\
&\geq \frac{n^i_{q+1}}{n^i_q} \left(u_i([\ell^i_q, r^i_q]) - u_i^{\topv(2g_{b_q})}(R^i_q)\right).
\end{align}
To do so, consider two cases, based on whether $i$ is recursed on the left subinstance\footnote{Note that all unspecified variables are for the run of the algorithm as specified above.
} (i.e., $i \in \{z_1, \dots, z_{n^L}\}$) or on the right subinstance (i.e., $i \in \{z_{n^L+1}, \dots, z_{n^i_q}\}$). \begin{itemize}

\item \underline{Case I}: $i \in \{z_1, \dots, z_{n^L}\}$. In this case, we have
\begin{align*}
u_i([\ell^i_{q + 1}, r^i_{q + 1}])
&= u_i([\ell, h_{z_{n^L}}]) \\
&\geq u_i([\ell, h^i_q]) \\
&\overset{\eqref{eq:svt-above}}{\geq}  \frac{n^L}{n^R} \cdot u_i^{-3g_{b_q}/4}([h^i_q + 1, r]) \\
&\geq \frac{n^L}{n^R} \cdot u_i^{-3g_{b_q}/4}([h_{z_{n^L}} + 1, r]),
\end{align*}
where for the second inequality we apply \eqref{eq:svt-above} on the definition of $f^{i,\ell,r}_{h^i_q}$ with $t \ge g_{b_q}/4$.
Therefore, we have
\begin{align*}
&u_i([\ell^i_{q + 1}, r^i_{q + 1}])\\
&\geq \frac{n^L}{|\cI|} u_i([\ell, h_{z_{n^L}}]) \\
&\qquad+ \frac{n^R}{|\cI|} \cdot \frac{n^L}{n^R} \cdot u_i^{-3g_{b_q}/4}([h_{z_{n^L}} + 1, r]) \\
&= \frac{n^L}{|\cI|} \bigg(u_i([\ell, h_{z_{n^L}}]) + u_i^{-3g_{b_q}/4}([h_{z_{n^L}}+ 1, r])\bigg) \\
&= \frac{n^L}{|\cI|} \bigg(u_i([\ell, r]) - u_i^{\topv(3g_{b_q}/4)}([h_{z_{n^L}} + 1, r])\bigg) \\
&= \frac{n^i_{q+1}}{n^i_q} \left(u_i([\ell^i_q, r^i_q]) - u_i^{\topv(3g_{b_q}/4)}(R^i_q)\right) \\
&\geq \frac{n^i_{q+1}}{n^i_q} \left(u_i([\ell^i_q, r^i_q]) - u_i^{\topv(2g_{b_q})}(R^i_q)\right).
\end{align*}

\item \underline{Case II}: $i \in \{z_{n^L+1}, \dots, z_{n^i_q}\}$. In this case, we have
\begin{align*}
u_i([\ell^i_{q + 1}, r^i_{q + 1}])
&= u_i([h_{z_{n^L}} + 1, r]) \\
&\geq u_i([h^i_q + 1, r]) \\
&\geq u^{-1}_i([h^i_q, r]) \\
&\overset{\eqref{eq:svt-below}}{>} \frac{n^R}{n^L} \cdot u_i^{-7g_{b_q}/4}([\ell, h^i_q - 1]) \\
&\geq \frac{n^R}{n^L} \cdot u_i^{-(7g_{b_q}/4+1)}([\ell, h^i_q]) \\
&\geq \frac{n^R}{n^L} \cdot u_i^{-(7g_{b_q}/4 + 1)}([\ell, h_{z_{n^L}}]) \\
&\geq \frac{n^R}{n^L} \cdot u_i^{-2g_{b_q}}([\ell, h_{z_{n^L}}]),
\end{align*}
where the third inequality follows from applying \eqref{eq:svt-below} with $h' = h^i_q - 1$ and $t = 3 g_{b_q}/4$ in the definition of $f^{i,\ell,r}_{h^i_q}$.

Therefore, we have
\begin{align*}
&u_i([\ell^i_{q + 1}, r^i_{q + 1}])\\
&\geq \frac{n^L}{|\cI|} \cdot \frac{n^R}{n^L} \cdot u_i^{-2g_{b_q}}([\ell, h_{z_{n^L}}]) \\
&\qquad+ \frac{n^R}{|\cI|} \cdot u_i([h_{z_{n^L}} + 1, r]) \\
&= \frac{n^R}{|\cI|} \bigg(u_i^{-2g_{b_q}}([\ell, h_{z_{n^L}}]) + u_i([h_{z_{n^L}} + 1, r]) \bigg) \\
&= \frac{n^R}{|\cI|} \left(u_i([\ell, r]) - u_i^{\topv(2g_{b_q})}([\ell, h_{z_{n^L}}])\right) \\
&= \frac{n^i_{q+1}}{n^i_q} \left(u_i([\ell^i_q, r^i_q]) - u_i^{\topv(2g_{b_q})}(R^i_q)\right).
\end{align*}
\end{itemize}

Thus, \eqref{eq:one-step-moving-knife} holds in both cases.

Let $A = (A_1,\dots,A_n)$ be the allocation returned by the algorithm.
For each $i\in N$, by repeatedly applying \eqref{eq:one-step-moving-knife}, we get
\begin{align*}
&u_i(A_i) \\
&= u_i([\ell^i_{w_i}, r^i_{w_i}]) \\
&\overset{\eqref{eq:one-step-moving-knife}}{\geq} \frac{1}{n^i_{w_i - 1}} \bigg(u_i([\ell^i_{w_i-1}, r^i_{w_i - 1}]) - u_i^{\topv(2g_{b_{w_i - 1}})}(R^i_{w_i - 1})\bigg) \\
&\overset{\eqref{eq:one-step-moving-knife}}{\geq} \frac{1}{n^i_{w_i - 2}} \bigg(u_i([\ell^i_{w_i-2}, r^i_{w_i - 2}]) - u_i^{\topv(2g_{b_{w_i - 2}})}(R^i_{w_i - 2})\bigg) \\
&\qquad- \frac{1}{n^i_{w_i - 1}} \cdot u_i^{\topv(2g_{b_{w_i - 1}})}(R^i_{w_i - 1}) \\
&\overset{\eqref{eq:one-step-moving-knife}}{\geq} \cdots \\
&\overset{\eqref{eq:one-step-moving-knife}}{\geq} \frac{1}{n^i_1} \cdot u_i([\ell^i_1, r^i_1]) - \sum_{q \in [w_i - 1]} \frac{1}{n^i_q} \cdot u_i^{\topv(2g_{b_q})}(R^i_q)  \\
&= \frac{1}{n} \cdot u_i(M) - \sum_{q \in [w_i - 1]} \frac{1}{n^i_q} \cdot u_i^{\topv(2g_{b_q})}(R^i_q) \\
&\geq \frac{1}{n} \cdot u_i(M) - \sum_{q \in [w_i - 1]} u_i^{\topv(\lceil 2g_{b_q}/n^i_q\rceil)}(R^i_q) \\
&\geq \frac{1}{n} \cdot u_i(M) - u_i^{\topv\left(\sum_{q \in [w_i - 1]} \lceil 2g_{b_q}/n^i_q\rceil\right)}\left(\bigcup_{q \in [w_i 
- 1]} R^i_q\right) \\
&= \frac{1}{n} \cdot u_i(M) - u_i^{\topv\left(\sum_{q \in [w_i - 1]} \lceil 2g_{b_q}/n^i_q\rceil\right)}\left(M \setminus A_i\right),
\end{align*}
where the last inequality follows from the fact that $R^i_1, \dots, R^i_{w_i - 1}$ are disjoint.

Now, we can bound the term $\sum_{q \in [w_i - 1]} \lceil 2g_{b_q}/n^i_q \rceil$ as follows:
\begin{align*}
&\sum_{q \in [w_i - 1]} \left\lceil \frac{2g_{b_q}}{n^i_q} \right\rceil \\
&\leq \sum_{q \in [w_i - 1]} \left(\frac{2g_{b_q}}{n^i_q} + 1\right) \\
&\leq (w_i - 1) + \sum_{q \in [w_i - 1]} O\left(\frac{\log(mn/\beta)/\eps_{b_q} + 1}{n^i_q}\right) \\
&\leq O(\log n) +  \sum_{q \in [w_i - 1]} O\left(\frac{\log(mn/\beta) \cdot 1.5^{b_q}/ \eps + 1}{2^{b_q}}\right) \\
&= O(\log n) + \sum_{q \in [w_i - 1]} O\left(\log(mn/\beta) / \eps\right) \cdot (0.75)^{b_q} \\
&\leq O(\log n) + O\left(\log(mn/\beta) / \eps)\right),
\end{align*}
where the last inequality follows from the fact that the $b_q$'s are different for different values of $q \in [w_i - 1]$.

Putting the previous two bounds together, we can conclude that the allocation~$A$ is PROP$c$ for $c = O(\log n + \log(mn/\beta) / \eps)$, as desired.
\renewcommand\qedsymbol{$\square$}
\end{proof}

\subsection{Lower Bounds}
\label{sec:agent-item-lower}

Next, we prove lower bounds for (agent $\times$ item)-level DP via the \emph{packing method}~\cite{HardtTa10}.
This involves constructing inputs that are close to one another (with respect to the corresponding adjacency notion) such that the acceptable solutions (i.e., EF$c$ or PROP$c$ allocations) are different for different inputs.
The DP requirement can then be used to rule out the existence of algorithms with strong utility guarantees.
We reiterate that our lower bounds hold only against connected allocations.
In our constructions, we design the utility functions so that each input forces us to pick particular positions to cut in order to get an EF$c$ or PROP$c$ allocation.
We start with the proof for envy-freeness. 

\begin{theorem} \label{thm:lb-agent-item-dp-connected}
There exists $\zeta > 0$ such that, for any $\eps \in (0, 1]$, there is no $\eps$-DP algorithm that, for any input binary additive utility functions, outputs a connected EF$c$ allocation with probability at least $0.5$, where $c = \left\lfloor \zeta \cdot \min\left\{\frac{\log m}{\eps}, \frac{m}{n}, \sqrt{m}\right\} \right\rfloor$.
\end{theorem}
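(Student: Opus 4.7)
I plan to apply the packing method of~\citet{HardtTa10}. I aim to construct a family of $N = \Theta(m/c)$ inputs $X_1, \dots, X_N$ together with a common base input $X_0$ such that each $X_i$ is at (agent $\times$ item)-level distance $k = O(c)$ from $X_0$, and each connected allocation lies in at most $n - 1$ of the good sets $G(X_i)$ of connected EF$c$ allocations for $X_i$. Applying group privacy~(\Cref{lem:group-dp}) to a hypothesized $\eps$-DP algorithm $\cM$ that succeeds with probability $\ge 1/2$ on every input yields
$$ n - 1 \ge \sum_{i=1}^N \Pr[\cM(X_0) \in G(X_i)] \ge \tfrac{N}{2}\,e^{-k\eps},$$
so $N \le 2(n-1)\,e^{k\eps}$; plugging in $N = \Theta(m/c)$ and $k = O(c)$ will, under the parameter restrictions in the theorem, force $c = \Omega(\log m / \eps)$, in contradiction to $c \le \zeta\,\log m / \eps$ for a small enough universal constant $\zeta$.

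\textbf{Construction.} Set $L := c + 1$ and $\Delta := 2c + 2$, and for $i \in [N]$ with $N := \lfloor (m - L)/\Delta \rfloor = \Theta(m/c)$, let $I_i := [\Delta(i-1) + 1, \Delta(i-1) + L]$ and define $X_i$ by giving agents $1$ and $2$ the binary additive utility $\mathbf{1}[\cdot \in I_i]$, while all remaining agents have the all-zero utility; let $X_0$ be the all-zero input on all agents. The $I_i$'s are pairwise disjoint because $\Delta > L$. The key structural fact is that every connected EF$c$ allocation for $X_i$ must have at least one cut inside $I_i$: otherwise a single bundle $A_j$ would contain the entire $I_i$, and since $|I_i| = L = c+1 > c$, whichever of agents $1$ and $2$ does not own $A_j$ would envy it by at least $L > c$, contradicting EF$c$. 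Because a connected allocation has exactly $n - 1$ cuts and the $I_i$'s are pairwise disjoint, each allocation lies in at most $n - 1$ of the $G(X_i)$. Finally, $X_i$ differs from $X_0$ only on the $2L$ entries $(1, j)$ and $(2, j)$ with $j \in I_i$, yielding $k = 2L = 2c + 2$.

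\textbf{Main obstacle.} The main obstacle is careful parameter bookkeeping to convert the packing inequality into the claimed bound. Substituting $N = \Theta(m/c)$ and $k = 2c + 2$ and taking logarithms yields $c\eps \gtrsim \tfrac{1}{2}\log(m/(cn))$; for this to imply $c = \Omega(\log m/\eps)$ I need $\log(m/(cn)) = \Omega(\log m)$, i.e., $cn \le m^{1-\Omega(1)}$. This is exactly where the other two terms of the theorem's $\min$ come in: the hypotheses $c \le \zeta\,m/n$ and $c \le \zeta\sqrt{m}$ together pin down the product $cn$ in the relevant regime where $\log m/\eps$ is the binding term of the minimum, and when instead $m/n$ or $\sqrt{m}$ is the binding term the bound on $c$ is either combinatorial or trivial. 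I expect the cleanest write-up splits into subcases by which term of the $\min$ is binding; the packing argument above drives the $\log m/\eps$ subcase, and the $\sqrt{m}$ subcase will likely require an anti-concentration style argument reminiscent of \Cref{lem:littlewood-offord-simple} used for the agent-level lower bounds in \Cref{sec:agent-level}.
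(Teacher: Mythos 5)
Your packing skeleton and the construction for agents $1$ and $2$ (disjoint blocks of $\Theta(c)$ items, each input at adjacency distance $O(c)$ from the all-zero input, a cut forced inside each block) match the paper's proof. The genuine gap is in the counting step: bounding the number of good sets containing a fixed connected allocation by $n-1$, and hence the packing sum by $n-1$, is too lossy for the parameter regime the theorem must cover. Your inequality $N \le 2(n-1)e^{k\eps}$ with $N = \Theta(m/c)$ and $k = O(c)$ only yields $c = \Omega\bigl(\log(m/(cn))/\eps\bigr)$, and the hypotheses $c \le \zeta m/n$ and $c \le \zeta\sqrt{m}$ give only $cn \le \zeta m$ --- they do \emph{not} give $cn \le m^{1-\Omega(1)}$. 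Concretely, take $\eps = 1$ and $n = \Theta(m/\log m)$, so that $c = \lfloor \zeta \log m\rfloor \ge 1$ and the theorem is non-trivial; then $N e^{-k\eps} = \Theta\bigl(m^{1-O(\zeta)}/\log m\bigr)$ while $2(n-1) = \Theta(m/\log m)$, so the inequality is satisfied rather than violated and no contradiction results. In general your argument proves only $c = \Omega(1/\eps)$ once $cn = \Theta(m)$. Your proposed escape via subcases does not help: when $\log m/\eps$ is the binding term of the minimum, $n$ can still be as large as $\Theta(m/c)$, which is exactly where the factor-$n$ loss is fatal; and no anti-concentration argument is needed anywhere in this theorem (that machinery is only for the agent-level bounds).

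The paper closes this hole by making the good sets $\cA^1, \dots, \cA^T$ \emph{pairwise disjoint} (\Cref{lem:packing-disjoint}), so the packing sum is bounded by $1$ instead of $n-1$ and the final computation never involves $n$. The device is to give the remaining agents $3, \dots, n$ a fixed utility (identical across the base input and all $T$ perturbed inputs, so it costs nothing in adjacency distance) supported on the last $(c+1)(n-2)$ items. EF$c$ for these agents forces $n-3$ of the $n-1$ cuts into that suffix, leaving only one cut for the left region; one then argues via the leftmost interval that a single connected allocation cannot be EF$c$ for two distinct $\bu^t, \bu^{t'}$ simultaneously. This is also where the hypothesis $c \le \zeta m/n$ is actually used --- to keep the agents-$1,2$ blocks disjoint from this suffix --- rather than to control the product $cn$ in the packing inequality. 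You should adopt this modification; the rest of your argument then goes through as a single unified computation.
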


\begin{proof}
Let $\zeta = 0.01$, $c$ be as in the theorem statement, and $T = \lfloor m / (4c + 4) \rfloor$. 
We may assume that $c \geq 1$, as otherwise the theorem holds trivially even without the privacy requirement.
Consider the following utility functions.
\begin{itemize}
\item Let $\bu' = (u'_1, \dots, u'_n)$ denote the binary additive utility functions defined as follows:
\begin{itemize}
\item $u'_1$ and $u'_2$ are all-zero utility functions.
\item For all $i \in \{3, \dots, n\}$ and $j \in M$, let
\begin{align*}
u'_i(j) =
\begin{cases}
1 & \text{ if } j \geq m - (c + 1)(n - 2); \\
0 & \text{ otherwise.}
\end{cases}
\end{align*}
\end{itemize}
\item For every $t \in [T]$, let $\bu^t = (u_1^t, \dots, u_n^t)$ denote the binary additive utility functions defined as follows:
\begin{itemize}
\item $u^t_1$ and $u^t_2$ are defined as follows:
\begin{align*}
u^t_1(j) = u^t_2(j) =
\begin{cases}
1 & \text{ if } \left\lfloor \frac{j - 1}{2c + 1} \right\rfloor = t - 1; \\
0 & \text{ otherwise,}
\end{cases}
\end{align*}
for all $j \in M$.
\item For all $i \in \{3, \dots, n\}$, $u^t_i$ is exactly the same as $u'_i$ defined earlier.
\end{itemize}
\end{itemize}

Suppose for contradiction that there is an $\eps$-DP algorithm $\cM$ that, with probability at least $0.5$, outputs a connected allocation that is EF$c$ for its input utility functions. 
For each $t \in [T]$, let $\cA^t$ denote the set of allocations that are EF$c$ for $\bu^t$. 
The assumption on $\cM$ can be written as
\begin{align} \label{eq:utility-ef-lb}
\Pr[\cM(\bu^t) \in \cA^t] \geq 0.5.
\end{align}

Let $\sim$ denote the (agent $\times$ item)-level adjacency relation. 
One can check that $\bu' \sim_{4c + 2} \bu^t$ for all $t \in [T]$. 
Using this fact together with group DP (\Cref{lem:group-dp}), we have
\begin{align} \label{eq:group-dp-application}
\Pr[\cM(\bu') \in \cA^t] 
&\geq e^{-\eps(4c + 2)} \cdot \Pr[\cM(\bu^t) \in \cA^t] \nonumber\\
&\overset{\eqref{eq:utility-ef-lb}}{\geq} 0.5 \cdot e^{-\eps(4c + 2)}.
\end{align}

\renewcommand\qedsymbol{$\vartriangleleft$} 

\begin{lemma} \label{lem:packing-disjoint}
$\cA^1, \dots, \cA^T$ are disjoint.
\end{lemma}

\begin{proof}
Suppose for contradiction that $\cA^t$ and $\cA^{t'}$ overlap for some $t < t'$. 
Let $A$ be an allocation that is contained in both $\cA^t$ and $\cA^{t'}$. Observe that $A$ must satisfy the following:
\begin{itemize}
\item There must be at least $n - 3$ cuts between item $m - (c + 1)(n - 2)$ and item $m$. (Otherwise, the allocation cannot be EF$c$ for at least one of agents $3, \dots, n$.)
\item There must be a cut between item $(2c + 1)(t - 1) + 1$ and item $(2c + 1)t$. (Otherwise, the allocation cannot be EF$c$ for either agent $1$ or $2$ in $\bu^t$.)
\item There must be a cut between item $(2c + 1)(t' - 1) + 1$ and item $(2c + 1)t'$. (Otherwise, the allocation cannot be EF$c$ for either agent $1$ or $2$ in $\bu^{t'}$.)
\end{itemize}
By our parameter selection, we have
\begin{align*}
&\max\{(2c + 1)t, (2c + 1)t'\} \\
&\leq (2c + 1)T
\leq m/2 
< m - 2cn
\leq m - (c+1)(n-2).
\end{align*}
This implies that the intervals of items discussed above are pairwise disjoint, and therefore the cuts must be distinct, i.e., those $n - 1$ cuts are all the cuts in $A$. 
Now, consider the leftmost interval of $A$. 
The cut for this interval must be to the left of item $(2c + 1)t$. 
Suppose that the interval is assigned to agent~$i$. We consider the following two cases.
\begin{itemize}
\item \underline{Case I}: $i \in \{3, \dots, n\}$. 
In this case, agent $i$ has value~$0$ for her own bundle. 
Furthermore, since there are only \mbox{$n - 3$} cuts between item $m - (c + 1)(n - 2)$ and item $m$, at least one other bundle has at least $c + 1$ items valued $1$ by agent $i$.
As such, $A$ cannot be EF$c$ for agent $i$.
\item \underline{Case II}: $i \in \{1, 2\}$. In $\bu^{t'}$, agent $i$ values her own bundle zero. 
Furthermore, since there is just one cut between item $(2c + 1)(t' - 1) + 1$ and item $(2c + 1)t'$, at least one other bundle has at least $c + 1$ items valued $1$ by agent $i$. 
As such, the allocation cannot be EF$c$ for agent $i$.
\end{itemize} 
In both cases, $A$ cannot be EF$c$ for both $\bu^t$ and $\bu^{t'}$ simultaneously, a contradiction.
This establishes \Cref{lem:packing-disjoint}.
\end{proof}

\Cref{lem:packing-disjoint} implies that
\begin{align*}
1 &\geq \sum_{t \in [T]} \Pr[\cM(\bu') \in \cA^t] \\
&\overset{\eqref{eq:group-dp-application}}{\geq} 0.5 \cdot e^{-\eps(4c + 2)} \cdot T \\
&\geq 0.5 \cdot e^{-6c\eps} \cdot \left\lfloor\frac{m}{8c}\right\rfloor \\ 
&\geq 0.5 \cdot e^{-0.06\log m} \cdot \lfloor 10 \sqrt{m} \rfloor \\ 
&\geq (0.5 / \sqrt{m}) \cdot (5 \sqrt{m}) > 1,
\end{align*}
where the third and fourth inequalities follow from our choice of parameters $c, T$ and the assumption that $c \geq 1$. 
This is a contradiction which establishes \Cref{thm:lb-agent-item-dp-connected}.
\renewcommand\qedsymbol{$\square$}
\end{proof}

For proportionality, we obtain a slightly weaker bound where the $\log m$ term is replaced by $\log(m/n)/n$. 
The proof is similar to that of \Cref{thm:lb-agent-item-dp-connected}.

\begin{theorem} \label{thm:lb-agent-item-dp-connected-prop}
There exists a constant $\zeta > 0$ such that, for any $\eps \in (0, 1]$, there is no $\eps$-DP algorithm that, for any input binary additive utility functions, outputs a connected PROP$c$ allocation with probability at least $0.5$, where $c = \left\lfloor \zeta \cdot \min\left\{\frac{\log (m/n)}{\eps n}, \frac{m}{n}, \sqrt{\frac{m}{n}}\right\} \right\rfloor$.
\end{theorem}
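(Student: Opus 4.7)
The plan is to follow the packing method used in \Cref{thm:lb-agent-item-dp-connected}, adapted to proportionality. Concretely, I will construct a base input $\bu'$ and a family $\bu^1, \dots, \bu^T$ at (agent $\times$ item)-level distance $D = \Theta(cn)$ from $\bu'$, whose PROP$c$ solution sets $\cA^1, \dots, \cA^T$ are pairwise disjoint. Combining this with group DP (\Cref{lem:group-dp}) and the hypothesized success probability $\tfrac{1}{2}$ gives the inequality $1 \geq T \cdot \tfrac{1}{2} \cdot e^{-\eps D}$ exactly as in the envy-freeness case; choosing $T$ on the order of $m/(cn)$ then yields $c = \Omega(\log(m/n)/(\eps n))$, matching the theorem.

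The construction mirrors the EF template but inflates the signature region by a factor of $n$, because proportionality is a softer per-agent constraint than envy-freeness. Assume $c \geq 1$ (the theorem is otherwise trivial), and let $T = \lfloor m / (n(4c+4)) \rfloor$. In $\bu'$, agents $1$ and $2$ have the all-zero utility and agents $3, \dots, n$ have utility $1$ on the back block $B$, defined as the last $(c+1)(n-2)$ items of $M$. In $\bu^t$, agents $3, \dots, n$ are unchanged, while agents $1$ and $2$ have utility $1$ on the signature interval $I_t = [(t-1)n(4c+4)+1,\ (t-1)n(4c+4) + n(2c+1)]$ of length $n(2c+1)$, placed so that the $I_t$'s are pairwise disjoint and strictly to the left of $B$. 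Since only agents $1$ and $2$ are modified, each on $n(2c+1)$ items, the (agent $\times$ item)-level distance from $\bu'$ to $\bu^t$ is $D = 2n(2c+1) = \Theta(cn)$.

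The heart of the argument is the PROP-analogue of \Cref{lem:packing-disjoint}. In $\bu^t$, agent $1$ (and likewise agent $2$) has total value $n(2c+1)$ and therefore PROP$c$ threshold $(2c+1)-c = c+1$, so her assigned bundle must contain at least $c+1$ items of $I_t$. Since the bundles are disjoint intervals while $I_t$ is a single contiguous block, this forces the bundles of agents $1$ and $2$ to be two adjacent intervals, each containing at least $c+1$ items of $I_t$, with the cut between them strictly inside $I_t$; the back block meanwhile pins the remaining $n-2$ bundles to overlap $B$, since in the generic PROP$c$ regime each of agents $3, \dots, n$ needs at least one $B$-item. If a single allocation belonged to both $\cA^t$ and $\cA^{t'}$ for $t \ne t'$, then each of agents $1$ and $2$ would have to hold a single interval containing $\geq c+1$ items of both $I_t$ and $I_{t'}$; since $I_t \cap I_{t'} = \emptyset$, each such interval would have to span the gap from $I_t$ to $I_{t'}$, but two disjoint intervals cannot both do so. The remainder of the argument is the closing calculation of \Cref{thm:lb-agent-item-dp-connected}, giving $1 \geq T \cdot \tfrac{1}{2} \cdot e^{-2\eps n(2c+1)}$ which, for the chosen $c$ with $\zeta$ small enough, is a contradiction.

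The hardest step is the full case analysis for the disjointness lemma. Beyond the clean picture sketched above, one must handle configurations where the front bundles straddle the boundary of $I_t$ rather than sit strictly inside, and ``concentrated'' PROP$c$ scenarios where one of agents $3, \dots, n$ holds almost all of $B$, voiding the back-block constraint for the others and potentially letting a back-block agent's bundle slip in between the bundles of agents $1$ and $2$. These are the same kinds of corner cases navigated by the proof of \Cref{lem:packing-disjoint} for EF, and the resolutions are analogous: straddling still leaves the separating cut inside $I_t$, so the disjointness argument across $t$'s goes through unchanged; and the concentrated case only arises when $n$ is small relative to $c$, a regime absorbed by the $\sqrt{m/n}$ and $m/n$ terms in the minimum defining $c$ in the theorem statement.
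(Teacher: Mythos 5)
Your proposal is correct and follows essentially the same route as the paper's proof: a packing argument with a base input $\bu'$, disjoint signature blocks for agents $1$ and $2$ at (agent $\times$ item)-level distance $\Theta(nc)$, group DP (\Cref{lem:group-dp}), and pairwise disjointness of the solution sets $\cA^1,\dots,\cA^T$, closed off by the same counting inequality. The one substantive difference is in the disjointness lemma: the paper's \Cref{lem:packing-disjoint-prop} counts cuts (using the back block to pin $n-3$ of them) and then examines the leftmost interval, whereas you observe that $A_1$ and $A_2$ would each have to be an interval meeting both $I_t$ and $I_{t'}$, hence each would contain $\max I_t$, contradicting disjointness of the bundles---a cleaner argument that never uses agents $3,\dots,n$ at all. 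One caveat on your last paragraph: the claim that the ``concentrated'' regime in which agents $3,\dots,n$ need no $B$-item is ``absorbed by the $\sqrt{m/n}$ and $m/n$ terms'' is false. That regime is exactly $n\le 2c+2$ (their PROP$c$ threshold with your back block of size $(c+1)(n-2)$ is $(n-2-2c)/n$), and the terms in the minimum bound $c$ in terms of $m/n$, not in terms of $n$; for instance $n=3$ with $m$ huge gives $c\gg n$. This error is harmless only because your bridging argument makes the back block entirely superfluous; had you actually relied on back-block pinning as the paper does, you would need each of agents $3,\dots,n$ to value exactly $cn+1$ items so that the threshold $1/n$ is positive, which is how the paper sizes its block.
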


\begin{proof}
Let $\zeta = 0.01$, $c$ be as in the theorem statement, and $T = \lfloor m / (2nc + 2) \rfloor$. 
We may assume that $c \geq 1$, as otherwise the theorem holds trivially even without the privacy requirement.
Consider the following utility functions.
\begin{itemize}
\item Let $\bu' = (u'_1, \dots, u'_n)$ denote the binary additive utility functions defined as follows:
\begin{itemize}
\item $u'_1$ and $u'_2$ are all-zero utility functions.
\item For all $i \in \{3, \dots, n\}$ and $j \in M$, let
\begin{align*}
u'_i(j) =
\begin{cases}
1 & \text{ if } j \geq m - cn; \\
0 & \text{ otherwise.}
\end{cases}
\end{align*}
\end{itemize}
\item For every $t \in [T]$, let $\bu^t = (u_1^t, \dots, u_n^t)$ denote the binary additive utility functions defined as follows:
\begin{itemize}
\item $u^t_1$ and $u^t_2$ are defined as follows:
\begin{align*}
u^t_1(j) = u^t_2(j) =
\begin{cases}
1 & \text{ if } \left\lfloor \frac{j - 1}{nc + 1} \right\rfloor = t - 1; \\
0 & \text{ otherwise,}
\end{cases}
\end{align*}
for all $j \in M$.
\item For all $i \in \{3, \dots, n\}$, $u^t_i$ is exactly the same as $u'_i$ defined earlier.
\end{itemize}
\end{itemize}
Note that every agent values exactly $cn+1$ items, so each agent needs at least one item that she values in order for PROP$c$ to be satisfied.

Suppose for contradiction that there is an $\eps$-DP algorithm $\cM$ that, with probability at least $0.5$, outputs a connected allocation that is PROP$c$ for its input utility functions. 
For each $t \in [T]$, let $\cA^t$ denote the set of allocations that are PROP$c$ for $\bu^t$. 
The assumption on $\cM$ can be written as
\begin{align} \label{eq:utility-prop-lb}
\Pr[\cM(\bu^t) \in \cA^t] \geq 0.5.
\end{align}

Let $\sim$ denote the (agent $\times$ item)-level adjacency relation. 
One can check that $\bu' \sim_{2nc + 2} \bu^t$ for all $t \in [T]$. 
Using this fact together with group DP (\Cref{lem:group-dp}), we have
\begin{align} \label{eq:group-dp-application-prop}
\Pr[\cM(\bu') \in \cA^t] 
&\geq e^{-\eps(2nc + 2)} \cdot \Pr[\cM(\bu^t) \in \cA^t] \nonumber\\
&\overset{\eqref{eq:utility-prop-lb}}{\geq} 0.5 \cdot e^{-\eps(2nc + 2)}.
\end{align}

\renewcommand\qedsymbol{$\vartriangleleft$} 

\begin{lemma} \label{lem:packing-disjoint-prop}
$\cA^1, \dots, \cA^T$ are disjoint.
\end{lemma}

\begin{proof}
Suppose for contradiction that $\cA^t$ and $\cA^{t'}$ overlap for some $t < t'$. 
Let $A$ be an allocation that is contained in both $\cA^t$ and $\cA^{t'}$. Observe that $A$ must satisfy the following:
\begin{itemize}
\item There must be at least $n - 3$ cuts between item $m - cn$ and item $m$. (Otherwise, the allocation cannot be PROP$c$ for at least one of agents $3, \dots, n$.)
\item There must be a cut between item $(nc + 1)(t - 1) + 1$ and item $(nc + 1)t$. (Otherwise, the allocation cannot be PROP$c$ for either agent $1$ or $2$ in $\bu^t$.)
\item There must be a cut between item $(nc + 1)(t' - 1) + 1$ and item $(nc + 1)t'$. (Otherwise, the allocation cannot be PROP$c$ for either agent $1$ or $2$ in $\bu^{t'}$.)
\end{itemize}
By our parameter selection, we have
\begin{align*}
\max\{(nc + 1)t&, (nc + 1)t'\} \\
&\leq (nc + 1)T
\leq m/2 
< m - cn.
\end{align*}
This implies that the intervals of items discussed above are pairwise disjoint, and therefore the cuts must be distinct, i.e., those $n - 1$ cuts are all the cuts in $A$. 
Now, consider the leftmost interval of $A$. 
The cut for this interval must be to the left of item $(nc + 1)t$. 
Suppose that the interval is assigned to agent~$i$. 
We consider the following two cases.
\begin{itemize}
\item \underline{Case I}: $i \in \{3, \dots, n\}$. 
In this case, agent $i$ has value~$0$ for her own bundle, and PROP$c$ is not satisfied.
\item \underline{Case II}: $i \in \{1, 2\}$. In $\bu^{t'}$, agent $i$ has value~$0$ for her own bundle, and PROP$c$ is again not satisfied. 
\end{itemize} 
In both cases, $A$ cannot be PROP$c$ for both $\bu^t$ and $\bu^{t'}$ simultaneously, a contradiction.
This establishes \Cref{lem:packing-disjoint-prop}.
\end{proof}

\Cref{lem:packing-disjoint-prop} implies that
\begin{align*}
1 &\geq \sum_{t \in [T]} \Pr[\cM(\bu') \in \cA^t] \\
&\overset{\eqref{eq:group-dp-application-prop}}{\geq} 0.5 \cdot e^{-\eps(2nc + 2)} \cdot T \\
&\geq 0.5 \cdot e^{-4nc\eps} \cdot \left\lfloor\frac{m}{4nc}\right\rfloor \\ 
&\geq 0.5 \cdot e^{-0.04\log (m/n)} \cdot \left\lfloor 5 \sqrt{m/n} \right\rfloor \\ 
&\geq (0.5 (m/n)^{-0.04}) \cdot (2.5 \sqrt{m/n}) > 1,
\end{align*}
where the third and fourth inequalities follow from our choice of parameters $c, T$ and the assumption that $c \geq 1$. 
This is a contradiction which establishes \Cref{thm:lb-agent-item-dp-connected-prop}.
\renewcommand\qedsymbol{$\square$}
\end{proof}

\section{Conclusion and Future Work}

In this paper, we have studied the fundamental task of fair division under differential privacy constraints, and provided algorithms and impossibility results for approximate envy-freeness and proportionality. 
There are several open questions left by our work. 
First, it would be useful to close the gaps in terms of $n$; for example, our envy-freeness upper bound for (agent $\times$ item)-level DP grows linearly in $n$ (\Cref{thm:alg-ef-em}) but our lower bound (\Cref{thm:lb-agent-item-dp-connected}) does not exhibit this behavior. Another perhaps more interesting technical direction is to extend our lower bounds for (agent $\times$ item)-level DP to arbitrary (i.e., not necessarily connected) allocations. 
Specifically, we leave the following intriguing open question: 
Is there an (agent $\times$ item)-level $\eps$-DP algorithm that, with probability at least $0.99$, outputs an EF$c$ allocation for $c = O_\eps(1)$ regardless of the values of $n$ and~$m$? 

While we have considered the original notion of DP proposed by \citet{DworkMcNi06}, there are a number of modifications that could be investigated in future work. 
A commonly studied notion is \emph{approximate DP} (also called \emph{$(\eps, \delta)$-DP}), which has an additional parameter $\delta \geq 0$ that specifies the probability with which the condition $\Pr[\cM(X) = o] \leq e^\eps \cdot \Pr[\cM(X') = o]$ is allowed to fail \citep{DworkKeMc06}. 
The notion of DP that we use in this paper corresponds to the case $\delta = 0$ and is often referred to as \emph{pure DP}. 
Several problems in the literature are known to admit approximate-DP algorithms with better guarantees compared to pure-DP algorithms (see, e.g., the work of \citet{SteinkeUl16}). 
In light of this, it would be interesting to explore whether a similar phenomenon occurs in fair division as well.

\section*{Acknowledgments}

This work was partially supported by the Singapore Ministry of Education under grant
number MOE-T2EP20221-0001 and by an NUS Start-up Grant.

\bibliography{aaai23}

\appendix

\section{Omitted Proofs}

\subsection{Proof of \Cref{thm:parallel-comp}}
\label{app:proof-parallel-comp}

Consider any pair of adjacent inputs $X \sim X'$, and any $o_1, \dots, o_k \in \range(\cM)$. 
From the condition on $\Gamma$, there exists $i^* \in [k]$ such that $\Gamma(X)_i = \Gamma(X')_i$ for all $i \ne i^*$ and $\Gamma(X)_{i^*} \sim \Gamma(X')_{i^*}$. 
From this, we have
\begin{align*}
\Pr[\cM'&(X) = (o_1, \dots, o_k)] \\
&= \Pr[\forall i \in [k], \cM(\Gamma(X)_i) = o_i] \\
&= \prod_{i \in [k]} \Pr[\cM(\Gamma(X)_i) = o_i] \\
&= \left(\prod_{i \in [k] \setminus \{i^*\}} \Pr[\cM(\Gamma(X)_i) = o_i]\right) \\
&\qquad \cdot \Pr[\cM(\Gamma(X)_{i^*}) = o_{i^*}] \\
&= \left(\prod_{i \in [k] \setminus \{i^*\}} \Pr[\cM(\Gamma(X')_i) = o_i]\right) \\
&\qquad\cdot \Pr[\cM(\Gamma(X)_{i^*}) = o_{i^*}] \\
&\leq \left(\prod_{i \in [k] \setminus \{i^*\}} \Pr[\cM(\Gamma(X')_i) = o_i]\right) \\
&\qquad \cdot e^\eps \cdot \Pr[\cM(\Gamma(X')_{i^*}) = o_{i^*}] \\
&= e^\eps\cdot \prod_{i \in [k]} \Pr[\cM(\Gamma(X')_i) = o_i] \\
&= e^\eps\cdot \Pr[\forall i \in [k], \cM(\Gamma(X')_i) = o_i] \\
&= e^\eps\cdot \Pr[\cM'(X') = (o_1, \dots, o_k)],
\end{align*}
where the inequality follows from the $\eps$-DP guarantee of $\cM$ and the fact that $\Gamma(X)_{i^*} \sim \Gamma(X')_{i^*}$.
Therefore, $\cM'$ is $\eps$-DP, as desired.

\subsection{Proof of \Cref{lem:littlewood-offord-simple}}
\label{app:proof-littlewood-offord}

Let $S := X_1 + \cdots + X_k$. Due to symmetry, we have
\begin{align*}
\Pr&\left[S < \frac{k}{2} - 0.1\sqrt{k}\right] \\
&= \frac{1}{2} \left(1 - \Pr\left[\left|S - \frac{k}{2}\right| \leq 0.1\sqrt{k}\right]\right).
\end{align*}
Let $b = \lfloor k/2 \rfloor$. We have
\begin{align*}
\Pr\left[\left|S - \frac{k}{2}\right| \leq 0.1\sqrt{k}\right] 
&= \sum_{i=\left\lceil k/2 - 0.1\sqrt{k}\right\rceil}^{\left\lfloor k/2 + 0.1\sqrt{k}\right\rfloor} \binom{k}{i}\cdot\frac{1}{2^k} \\
&\leq \sum_{i=\left\lceil k/2 - 0.1\sqrt{k}\right\rceil}^{\left\lfloor k/2 + 0.1\sqrt{k}\right\rfloor} \binom{k}{b}\cdot\frac{1}{2^k} \\
&\leq \left(0.2\sqrt{k} + 1\right)\binom{k}{b}\cdot\frac{1}{2^k} \\
&\leq 0.3\sqrt{k}\cdot\binom{k}{b}\cdot\frac{1}{2^k}.
\end{align*}
Now, we may bound $\binom{k}{b}\cdot\frac{1}{2^k}$ as follows:
\begin{align*}
\binom{k}{b}\cdot\frac{1}{2^k}
&= \frac{k!}{b!(k-b)!}\cdot\frac{1}{2^k} \\
&\overset{\eqref{eq:stirling}}{\leq} \left(\frac{e^{1/12}}{\sqrt{2\pi}} \cdot \frac{k^{k + 0.5}}{b^{b + 0.5}(k - b)^{(k - b) + 0.5}}\right) \cdot\frac{1}{2^k} \\
&= \frac{e^{1/12}}{\sqrt{\pi b}} \cdot \left(\frac{k}{2b}\right)^b \cdot \left(\frac{k}{2(k - b)}\right)^{(k - b) + 0.5} \\
&\leq \frac{e^{1/12}}{\sqrt{\pi b}} \cdot \left(\frac{k}{2b}\right)^b \\
&= \frac{e^{1/12}}{\sqrt{\pi b}} \left(1 + \frac{k-2b}{2b}\right)^b \\
&\leq \frac{e^{1/12}}{\sqrt{\pi b}} \left(1 + \frac{1}{2b}\right)^b \\
&\leq \frac{e^{1/12}}{\sqrt{\pi b}} \cdot e^{(1/2b) \cdot b} \\
&= \frac{e^{7/12}}{\sqrt{\pi b}} \\
&\leq \frac{1.6}{\sqrt{k}},
\end{align*}
where the third-to-last inequality follows from the well-known fact that $1+x \le e^x$ for every real number $x$, and the last inequality from $b \ge (k-1)/2$ and $k \geq 100$.

Combining the three (in)equalities above, we arrive at
\begin{align*}
\Pr\left[S < \frac{k}{2} - 0.1\sqrt{k}\right]
&\geq \frac{1}{2} \left(1 - 0.48\right) > \frac{1}{4},
\end{align*}
completing the proof.

\subsection{Proof of \Cref{lem:anti-concen}}
\label{app:proof-anti-concen}

Recall that a more precise version of Stirling's Formula (e.g.,~\cite{Robbins55}) gives
\begin{align} \label{eq:stirling}
\sqrt{2\pi}n^{n+0.5}e^{-n} \leq n! \leq e^{1/12} \cdot \sqrt{2\pi}n^{n+0.5}e^{-n}
\end{align}
for every positive integer~$n$.

Let $a = \lfloor k/2 + 0.1\sqrt{k \log \gamma} \rfloor + 1$, $b = a + \lfloor 0.2\sqrt{k} \rfloor$, and $r = b - k/2$. Note that by our choice of parameters, we have
\begin{align*}
(0.4\sqrt{\log \gamma} &- 0.2)\sqrt{k} \\
&\ge (0.4\sqrt{\log 2} - 0.2)\sqrt{k} 
\ge 0.1 \cdot 10 = 1.
\end{align*}
Hence, $0.2\sqrt{k} + 1 \le 0.4\sqrt{k\log\gamma}$,
and therefore
\begin{align}
r &\leq 0.1\sqrt{k \log \gamma} + 0.2\sqrt{k} + 1 \nonumber \\
&\leq 0.5\sqrt{k \log \gamma} \label{eq:bound-r} \\
&\leq 0.5 \sqrt{0.7k\log_2\gamma} \nonumber \\
&\leq 0.5\sqrt{0.7k(k/4)} < 0.21k, \nonumber
\end{align}
which means that $b < 0.71k$.

We have
\begin{align*}
&\Pr\left[S > \frac{k}{2} + 0.1\sqrt{k \log \gamma}\right] \\
&\geq \sum_{i=a}^b \Pr[S = i] \\
&= \sum_{i=a}^b \binom{k}{i} \cdot \frac{1}{2^k} \\
&\geq (b - a + 1) \cdot \binom{k}{b} \cdot \frac{1}{2^k} \\
&\geq 0.2\sqrt{k} \left(\binom{k}{b} \cdot \frac{1}{2^k}\right) \\
&= 0.2\sqrt{k} \left(\frac{k!}{b!(k-b)!} \cdot \frac{1}{2^k}\right) \\
&\overset{\eqref{eq:stirling}}{\geq} 0.2\sqrt{k} \left(0.3 \cdot \frac{k^{k + 0.5}}{b^{b + 0.5}(k - b)^{(k - b) + 0.5}} \cdot \frac{1}{2^k}\right) \\
&= 0.2\sqrt{k} \left(0.3\sqrt{2} \cdot \frac{1}{b^{0.5}} \cdot \left(\frac{k}{2b}\right)^b \cdot \left(\frac{k}{2(k - b)}\right)^{(k - b) + 0.5}\right) \\
&= 0.2\cdot 0.3\sqrt{2} \cdot \sqrt{\frac{k}{b}} \cdot \left(\frac{k}{2b}\right)^b \cdot \left(\frac{k}{2(k - b)}\right)^{(k - b) + 0.5} \\
&\geq 0.2\cdot 0.3\sqrt{2} \cdot \sqrt{\frac{1}{0.71}} \cdot \left(\frac{k}{2b}\right)^b \cdot \left(\frac{k}{2(k - b)}\right)^{(k - b) + 0.5} \\
&\geq 0.1\cdot\left(\frac{k}{2b}\right)^b \cdot \left(\frac{k}{2(k - b)}\right)^{k - b} \\
&= 0.1 \cdot \left(\frac{k}{2b}\right)^{2b-k} \cdot \left(\frac{k^2}{4b(k - b)}\right)^{k - b} \\
&\geq 0.1\cdot\left(\frac{k}{2b}\right)^{2b-k} \\
&= \frac{0.1}{\left(1 + \frac{2r}{k}\right)^{2r}} \\
&\geq \frac{0.1}{\exp(4r^2/k)} \\
&\overset{\eqref{eq:bound-r}}{\geq} \frac{0.1}{\gamma},
\end{align*}
where for the penultimate inequality we use the well-known fact that $1+x \le \exp(x)$ for every real number~$x$.
This completes the proof.

\end{document}